\documentclass[conference]{IEEEtran}
\IEEEoverridecommandlockouts
\usepackage{cite}
\usepackage{fouriernc}
\usepackage{amsmath,amssymb,amsfonts}
\usepackage{amsthm}
\usepackage{multirow}
\usepackage[linesnumbered,ruled]{algorithm2e}
\usepackage{subfigure}
\usepackage{algorithmic}
\usepackage{array}
\usepackage{tabularx}
\usepackage{calrsfs}
\usepackage{graphicx}
\usepackage{cuted}
\usepackage{float}
\usepackage{textcomp}
\usepackage{url}
\usepackage{xcolor}
\usepackage{bm}
\usepackage{comment}
\usepackage{mathtools}
\usepackage{booktabs}       
\usepackage{nicefrac}       
\usepackage{microtype}      
\usepackage{times,epsfig}
\usepackage{balance} 
\usepackage[utf8]{inputenc}
\usepackage{nicefrac}       
\usepackage{microtype}      
\DeclareMathAlphabet\mathbfcal{OMS}{cmsy}{b}{n}
\DeclareMathAlphabet{\mathcal}{OMS}{cmsy}{m}{n}

\newtheorem{theorem}{Theorem}

\usepackage[T1]{fontenc}
\DeclareFontFamily{T1}{calligra}{}
\DeclareFontShape{T1}{calligra}{m}{n}{<->s*[1.44]callig15}{}
\DeclareMathAlphabet\mathcalligra   {T1}{calligra} {m} {n}
\DeclareMathAlphabet\mathzapf       {T1}{pzc} {mb} {it}

\newcount\Comments  
\Comments = 1
\newcommand{\kibitz}[2]{\ifnum\Comments=1{\color{#1}{#2}}\fi}
\newcommand{\hf}[1]{\kibitz{red}{[hf: #1]}}

\def\BibTeX{{\rm B\kern-.05em{\sc i\kern-.025em b}\kern-.08em
    T\kern-.1667em\lower.7ex\hbox{E}\kern-.125emX}}
\begin{document}

\title{To Warn or Not to Warn: Online Signaling in Audit Games \\
\thanks{This work was supported, in part by grant R01LM10207 from the National Institutes of Health, grants CNS-1526014, CNS-1640624, IIS-1649972 and IIS-1526860 from the National Science Foundation, grant N00014-15-1-2621 from the Office of Naval Research and grant W911NF-16-1-0069 from the Army Research Office.}
}

\author{\IEEEauthorblockN{Chao Yan}
\IEEEauthorblockA{\textit{Dept. of EECS} \\
\textit{Vanderbilt University}\\
Nashville, USA \\
chao.yan@vanderbilt.edu}
\and
\IEEEauthorblockN{~~~~~Haifeng Xu}
\IEEEauthorblockA{\textit{~~~~~Dept. of Computer Science} \\
\textit{~~~~~University of Virginia}\\
~~~~~Charlottesville, USA \\
~~~~~hx4ad@virginia.edu}
\and
\IEEEauthorblockN{Yevgeniy Vorobeychik}
\IEEEauthorblockA{\textit{Dept. of Computer Science and Engineering} \\
\textit{Washington University at St. Louis}\\
St. Louis, USA \\
yvorobeychik@wustl.edu}
\and
\IEEEauthorblockN{Bo Li}
\IEEEauthorblockA{\textit{Dept. of Computer Science} \\
\textit{University of Illinois at Urbana-Champaign}\\
Champaign, USA \\
lbo@illinois.edu}
\and
\IEEEauthorblockN{Daniel Fabbri}
\IEEEauthorblockA{\textit{Dept. of Biomedical Informatics} \\
\textit{Vanderbilt University Medical Center}\\
Nashville, USA \\
daniel.fabbri@vumc.org}
\and
\IEEEauthorblockN{Bradley A. Malin}
\IEEEauthorblockA{\textit{Dept. of Biomedical Informatics} \\
\textit{Vanderbilt University Medical Center}\\
Nashville, USA \\
b.malin@vumc.org}
}

\maketitle

\begin{abstract}
Routine operational use of sensitive data is often governed by law and regulation. For instance, in the medical domain, there are various statues at the state and federal level that dictate who is permitted to work with patients' records and under what conditions.  To screen for potential privacy breaches, logging systems are usually deployed to trigger alerts whenever a suspicious access is detected.  However, such mechanisms are often inefficient because 1) the vast majority of triggered alerts are false positives, 2) small budgets make it unlikely that a real attack will be detected, and 3) attackers can behave strategically, such that traditional auditing mechanisms cannot easily catch them. To improve efficiency, information systems may invoke signaling, so that whenever a suspicious access request occurs, the system can, in \emph{real time}, warn the user that the access may be audited. Then, at the close of a finite period, a selected subset of suspicious accesses are audited.  This gives rise to an online problem in which one needs to determine 1) whether a warning should be triggered and 2) the likelihood that the data request event will be audited. In this paper, we formalize this auditing problem as a Signaling Audit Game (SAG), in which we model the interactions between an auditor and an attacker in the context of signaling and the usability cost is represented as a factor of the auditor's payoff. We study the properties of its Stackelberg equilibria and develop a scalable approach to compute its solution. We show that a strategic presentation of warnings adds value in that SAGs realize significantly higher utility for the auditor than systems without signaling.  We perform a series of experiments with 10 million real access events, containing over 26K alerts, from a large academic medical center to illustrate the value of the proposed auditing model and the consistency of its advantages over existing baseline methods.
\end{abstract}

\begin{IEEEkeywords}
database auditing, privacy, signaling, Stackelburg game
\end{IEEEkeywords}

\section{Introduction}

Our society now collects, stores, and processes personal and intimate data with ever-finer detail, documenting our activities and innovations in a wide range of domains, ranging from health to finance \cite{mcafee2012big,yin2015big}.  Due to the potential value of such data, their management systems face non-trivial challenges to personal privacy and organizational secrecy. The sensitive nature of the data stored in such systems attracts malicious attackers who can gain value by disrupting them in various ways (e.g., stealing sensitive information, commandeering computational resources, committing financial fraud, and simply shutting the system down) \cite{pavlou2012dragoon,fabbri2013select}. Reports in the popular media  indicate that the severity and frequency of attack events continues to grow. Notably, the recent breach at Equifax led to the exposure of data on $143$ million Americans, including credit card numbers, Social Security numbers, and other information that could be used for identity theft or other illicit purposes \cite{cnn2017equifax}.  Even more of a concern is that the exploit of the system continued for at least two months before it was discovered.

To defend against attack, modern database systems are often armed with an alerting capability to detect and notify about potential risks incurred during daily use \cite{terzi2015survey,hasan2016secure,puppala2016data}.
This entails the logging of access events, which can be thought of as a collection of rules, each of which defines a semantic type of a potentially malicious situation \cite{motwani2008auditing,mazzawi2017anomaly}. 
In mission-critical systems, the access requests of authenticated users are often granted to ensure continuity of workflow and operations, such that notification about potential misuse is provided to administrators who perform retrospective audit investigations \cite{kuna2014outlier,blocki2012audit,lu2009auditing,groomer2018continuous}.
For instance, many healthcare organizations (HCOs) rely on alert, as well auditing, mechanisms to monitor anomalous accesses to electronic medical records (EMRs) by employees who may violate policy and breach the privacy of certain patients \cite{hedda2017evaluating}.
Similarly, the providers of online services, such as financial institutions and social media platforms, often use alerts and audits to defend against attacks, such as financial fraud 
and compromises to computational resources \cite{barth2007privacy}.
Though audits do not directly prevent attacks in their own right, they allow for the discovery of breaches that can be followed up on before they escalate to full blown exploits by attackers.

However, there are challenges to instituting robust auditing schemes in practice. First, the volume of triggered alerts is typically far greater than the auditing capacity of an organization \cite{laszka2017game}. Second, in practice, the majority of triggered alerts correspond to false positives, which stem from an organization's inability to define and recognize complex dynamic workflows.
Third, to mitigate the risk of being caught, attackers prefer to act strategically, such as carefully choosing the way (or target) to attack. And last, but not least, in the retrospective audit setting, attacks are not discovered until they are investigated. 

In essence, this is a resource allocation problem in an adversarial environment for which the Stackelberg security game (SSG) is a natural choice to apply for modeling  purposes \cite{tambe2011security,fang2016deploying,do2017game,sinha2018stackelberg}. In this model, the \emph{defender} first commits to a budget allocation policy and, subsequently, the \emph{attacker} responds with the optimal attack based on the defender's strategy. This model has enabled the design and deployment of solutions to various security problems in practice, such as \emph{ARMOR} (which was adopted by the LAPD to randomize checkpoints on the roadways at Los Angeles International Airport) \cite{pita2009using} and \emph{IRIS} (which was adopted by the US Federal Air Marshal Service to schedule air marshals on international flights) \cite{jain2010software}.
The audit game is a variation of the SSG designed to discover an efficient audit strategy  \cite{blocki2013audit,blocki2015audit,yan2018get,yan2019database}. With respect to strategic auditing, most research has focused on deriving a defense strategy by solving, or approximating, the Strong Stackelberg Equilibrium (SSE).
Unfortunately, it was recently shown that merely applying the SSE strategy may have limited efficacy in some security settings \cite{xu2015exploring}.  
This can be addressed by strategically revealing information to the attacker \cite{xu2015exploring,rabinovich2015information}, a mechanism referred to as \emph{signaling} (or \emph{persuasion} \cite{kamenica2011bayesian,dughmi2016algorithmic}). In this setting, the goal is to  set up  a \emph{signaling scheme} to reveal  noisy information  to the attacker and, by doing so, influence the attacker's decision with respect to outcomes that favors the defender.
However, all approaches derived to date rely on allocating resources \emph{before} signaling, such that it serves as a source of informational advantages for deceiving the attacker. 
Yet, in the audit setting, the decision sequence is reversed, such that the signal is revealed (e.g., via a warning screen) at the time of an access request, whereas the audit occurs after a certain period of time. This poses new challenges for the design of signaling schemes. 
 
Many organizations have recognized and adopted signaling mechanisms to protect sensitive data.
For example, in 2018, Vanderbilt University Medical Center (VUMC) announced a new \emph{break-the-glass} policy to protect the privacy of patients with a \emph{person of interest} (or VIP) designation, such as celebrities or public figures.\footnote{https://www.mc.vanderbilt.edu/myvumc/index.html?article=21557}
 Under this policy, access to the EMRs of these individuals triggers a pop-up warning that requires the user to provide a justification for the access. Once the warning has been served, the user can decide whether or not to proceed to access, knowing that each access is logged for potential auditing. However, such a policy is implemented in a \emph{post hoc} manner that does not optimize when to signal nor when to audit.

In this paper, we introduce the notion of a Signaling Audit Game (SAG), which applies signaling to alerts and auditing. We  leverage  the time gap between the access request made by the (potential) attacker and the actual execution of the attack to insert the signaling mechanism. When an alert is triggered by a suspicious access request, the system can, in real time, send a warning to the requestor. At this point, the attacker has an opportunity to re-evaluate his/her utility and make a decision about whether or not to continue with an attack. In contrast to previous models, which are all computed offline, the SAG optimizes both the warning strategy and the audit decision in real time for each incoming alert.
Importantly, we consider the usability cost into the SAG where the normal data requestors may be scared away by the warning messages in practice. This may lead to descent in operational efficiency of organizations which deploy SAGs.
To illustrate the performance of the SAG, in this paper we evaluate the expected utility of the auditor with a dataset of over 10 million real VUMC EMR accesses and predefined alert types.
The results of a comprehensive comparison, which is  performed  over a range of conditions, indicate that the SAG consistently outperforms state-of-the-art game theoretic alternatives that lack signaling by achieving higher overall utility while inducing nominal increases in computational burden.

The remainder of this paper is organized as follows. We  first propose the SAG and introduce how it is played in the audit setting. Next, we analyze the theoretical properties of  the SAG equilibria. The dataset, experiments, and results are then described in the evaluation section. Finally, we review representative related research in the database auditing domain, with a focus on methodology in the adversarial setting.

\section{Online Signaling in Audit Games}
In this section, we describe the SAG model in the general context of information services. For illustrative purposes, we use healthcare auditing as a running example.

\subsection{Motivating Domain}

To provide efficient healthcare service, HCOs typically store and process each patient's clinical, demographic, and financial information in an EMR system. 
EMR users, such as physicians and other clinical staff, need to access patients' EMRs when providing healthcare services. 
The routine workflow can be summarized as three steps: 1) a user initiates a search for a patient's EMR by name and date of birth, then the system returns a list of patients (often based on a fuzzy matching) along with their demographic information, 2) from the list, this user requests access to a patient's record, and 3) the system returns the requested record.  Due to the complex, dynamic and time-sensitive nature of healthcare, HCOs typically grant employees broad access privileges, which unfortunately creates an opportunity for malicious insiders to exploit patients' EMRs \cite{gunter2011experience}.

To deter malicious access, breach detection tools are commonly deployed to trigger alerts in real time for the administrator whenever suspicious events occur.
Alerts are often marked with predefined types of potential violations which help streamline inspection. Notable alert types include accessing the EMR of co-workers, neighbors, family members, and VIPs \cite{hedda2017evaluating}. 
Subsequently, a subset of the alerts are retrospectively audited at the end of each audit cycle, and the auditor determines which constitute an actual policy violation.

\subsection{Signaling Audit Games}
Here, we formalize the  Signaling Auditing Game (SAG) model. An SAG is played between an \emph{auditor} and an \emph{attacker} within a predefined audit cycle (e.g., one day). 
This game is sequential such that alerts arrive one at a time.
For each alert, the auditor needs to make two decisions in \emph{real time}: first, which signal to send (e.g., to warn the user/attacker or not), and second, whether to audit the alert.
Formally, let $X_c^{\tau}$ denote the event that alert $\tau$ will be audited, and $X_u^{\tau}$ denote that it is not audited.
We further let $\xi_1^{\tau}$ denote the event that a \emph{warning signal} is sent for alert $\tau$, 
while $\xi_0^{\tau}$ denotes the event that no warning is sent (i.e. a ``silent signal''). The warning $\xi_1^{\tau}$ is delivered privately through a dialog box on the requestor's screen, which might communicate ``\emph{Your access may be investigated. Would you like to proceed?}''.  $X_c^{\tau},X_u^{\tau} , \xi_1^{\tau}, \xi_1^{\tau}$ are random variables whose probabilities are to be designated.

We assume that there is a finite set of alert types $T$ and, for each $t \in T$, all alerts  are considered equivalent for our purposes (i.e., attacks triggering alerts of type $t$ all result in the same damages to the system).
The auditor has an auditing budget $B$ that limits the number of alerts that can be audited at the end of the cycle.
For each alert type $t$, let $V^t$ denote the cost (or time needed) to audit an alert of type $t$.
Thus, if $\theta^t$ is the probability of auditing alerts of type $t$ and $d^t$ is the number of such alerts, the budget constraint implies that $\sum_t \theta^t \cdot V^t d^t \le B$.

Since the setting is online, an optimal policy for the auditor must consider all possible histories of alerts, including the correlation between alerts.
Given that this is impractical, we simplify the scheme so that 1) each alert is viewed independently of alerts that precede it and 2) future alerts are considered with respect to their average relative frequency. 
Specifically, we assume that each attack effectively selects an alert type $t$, but do not need to consider the timing of attacks. Rather, we treat each alert as potentially adversarial.
This implicitly assumes that an attack (e.g., a physician's access to the EMR of a patient they do not treat) triggers a single alert.
However, this is without loss of generality, since we can define alert types that capture all realistic multi-alert combinations.


Now, we define the payoffs to the auditor and attacker.
For convenience, we refer to the alert corresponding to an attack as the \emph{victim alert}. 
If the auditor fails to audit a victim alert of type $t$,  the auditor and the attacker will receive utility $U^t_{d,u}$ and $U^t_{a,u}$, respectively. On the other hand, if the auditor audits a victim alert of type $t$, the auditor and the attacker will receive utility $U^t_{d,c}$ and $U^t_{a,c}$, respectively.
Here, the subscripts $c$ and $u$ stand for \emph{covered} and \emph{uncovered}, respectively. Naturally, we assume $U^{t}_{a,c} < 0 < U^{t}_{a,u}$ and $U^{t}_{d,c} \ge 0 > U^{t}_{d,u}$. 

Figure \ref{fig1} demonstrates the key interactions of both players along the timeline. Each yellow block within the audit cycle represents a triggered alert and the corresponding interactions with it. At a high level, the SAG consists of an online component and an offline component, as shown in the figure. In the online component, the auditor continues to update the real time probability of auditing any alert (may or may not be triggered) with respect to the alert type and the time point $\tau$. In other words, the auditor commits in real time to the auditing and signaling strategy. In this case, the auditor always moves first, as shown at the beginning of the lower timeline. 
\begin{figure}[h]
	\centering
	\includegraphics[width=7cm]{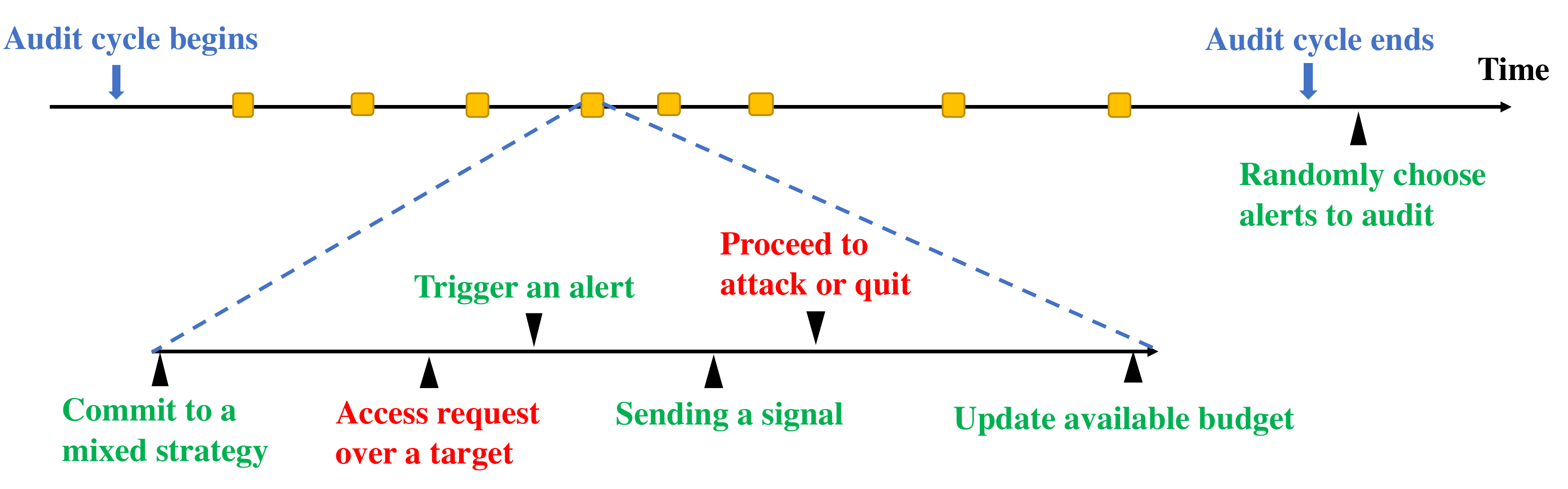}
	\caption{The auditor and attacker actions are shown in \emph{green} and \emph{red}, respectively.  }
	\label{fig1}
\end{figure}

A \emph{warning signaling scheme}, captured by the joint probability distribution of signaling and auditing, can be fully specified through four variables for each $\tau$:
\begin{equation}\label{audit_scheme}
\begin{aligned}
\mathbf{P}(\xi_1^{\tau}, X_c^{\tau}) = p_1^{\tau}, ~~~ \mathbf{P}(\xi_1^{\tau}, X_u^{\tau}) = q_1^{\tau}, \\
\mathbf{P}(\xi_0^{\tau}, X_c^{\tau}) = p_0^{\tau}, ~~~ \mathbf{P}(\xi_0^{\tau}, X_u^{\tau}) = q_0^{\tau}.\\
\end{aligned}
\end{equation}  
Upon receiving the signal, the attacker reacts as follows:
\begin{itemize}
	\item  
	After $\xi_1^{\tau}$: the system presents two choices to the attacker: \emph{``Proceed''} to access the requested record or quit.  
	\item  After $\xi_0^{\tau}$: the attacker automatically \emph{proceeds} to access the requested record (since the attacker receives no warning).
\end{itemize}
For convenience, when possible we omit the superscript $\tau$  when the alert we are dealing with, is readily apparent from the context. 

Figure \ref{fig0} illustrates the temporal sequence of decisions in the SAG.
Each edge in the figure is marked with its corresponding joint probability of a sequence of decisions up to and including that edge. 
Note that the two gray nodes are not extended because they do not lead to any subsequent event.\footnote{The upper gray node corresponds to the case when an access request is abandoned. 
	The lower one represents an impossible case because the user automatically gets the requested record.}
\begin{figure}[h]
	\centering
	\includegraphics[width=6cm]{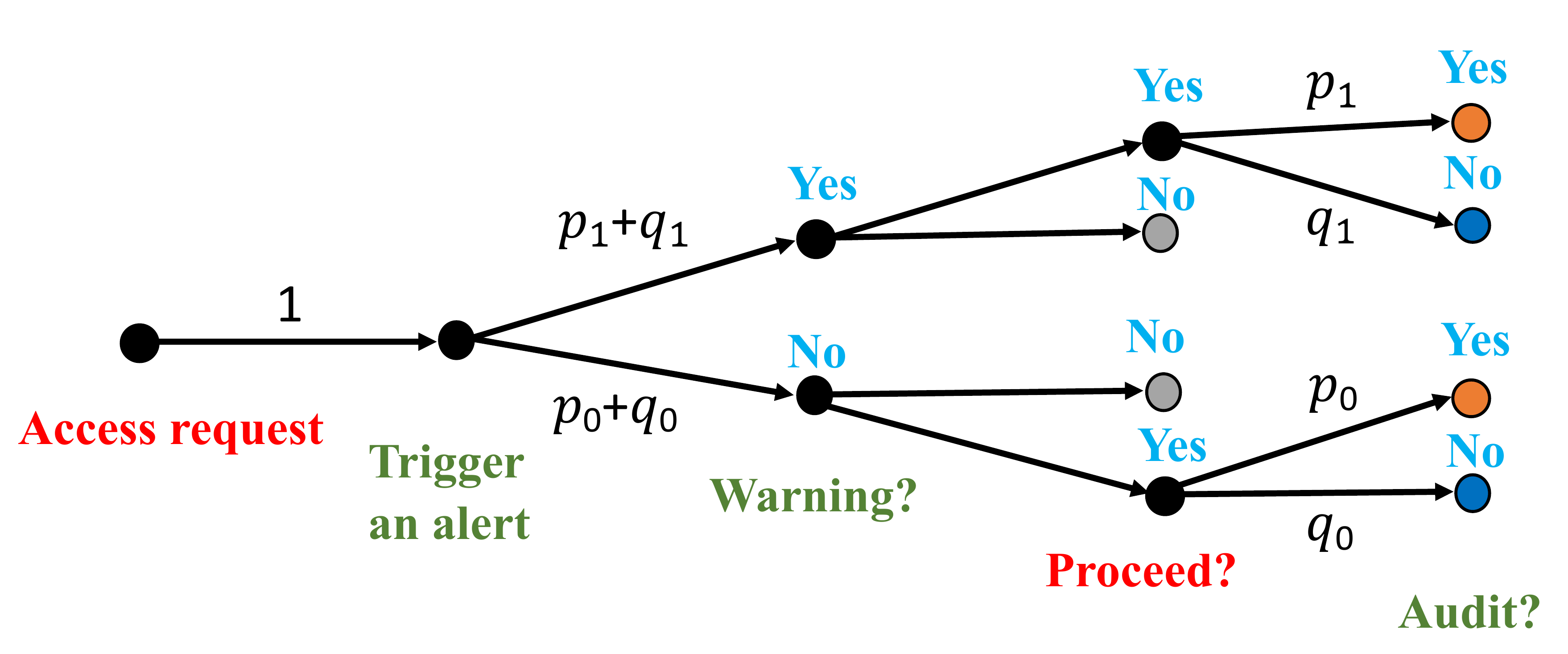}
	\caption{The decision tree of the auditor and an arbitrary user, the actions for which are shown in \emph{green} and \emph{red}, respectively.}
	\label{fig0}
\end{figure}
Further, observe that, $p_1 + q_1 + p_0 + q_0 = 1$, and the overall probability of auditing this alert  is $\mathbf{P}(X_c) = \mathbf{P}(X_c, \xi_1) + \mathbf{P}(X_c,\xi_0) = p_1 + p_0$. Conditional on the warning signal $\xi_1$, the probability of auditing this alert is thus  $\mathbf{P}( X_c | \xi_1) = p_1/(p_1+q_1)$. 

Since the auditor has a fixed auditing budget, she will need to update the remaining budget after determining the signal-conditional audit probability for the current alert. We use $B_{\tau}$ to denote the remaining budget \emph{before} receiving alert $\tau$. Let $t$ denote the type of alert $\tau$ and $\tau+1$ denote the next alert. After the signaling scheme for $\tau$ is executed, the auditor then updates $B_{\tau}$ for the use of the next alert $\tau+1$ as follows: 
\begin{itemize}\label{B_update_1}
	\item If $\xi_1^{\tau}$ is sampled: $B_{\tau + 1}  =   B_{\tau} - p_1^{\tau}/(p_1^{\tau} +q_1^{\tau}) \cdot V^t$. 
	\item If $\xi_0^{\tau}$ is sampled: $B_{\tau + 1 }  =   B_{\tau} - p_0^{\tau}/(p_0^{\tau} +q_0^{\tau}) \cdot V^t$. 
\end{itemize}
Additionally, we always ensure that $B_{\tau} \geq 0$.
The key challenge in our model is to compute the optimal $p_1^{\tau}, q_1^{\tau}, p_0^{\tau}, q_0^{\tau}$ for each alert $\tau$ \emph{online} by accounting for the remaining budget and the estimate number of future alerts. This needs to be performed to ensure that the auditor does not spend the budget at a rate that is excessively fast or slow. 

Without signaling, our audit game can be solved \emph{offline}, at the end of the audit cycle.
This situation can be captured by a  Stackelberg security game by viewing alerts as targets \cite{tambe2011security}.
The optimal auditing probabilities can then be determined offline by computing the SSE of this game. However, as our experiments show, this simplified strategy (which we refer to as \emph{offline SSE}) performs substantially worse than our online approach.  

The SAG can be viewed as a variation on the Stackelberg game, where it includes signaling and makes decisions about auditing \emph{online} upon the arrival of each alert.
The premise behind our solution is therefore a Strong Stackelberg equilibrium of the SAG, in which the auditor commits to a randomized joint signaling and auditing decision, and the associated probability distribution is observed by the attacker, who then decides first upon the alert type to use, and subsequently whether to proceed after a warning.
We will seek the optimal randomized commitment strategy for the auditor in this game.


The SAG model contains two crucial differences from prior investigations into signaling for security games. The first is that the signaling scheme for each alert in an SAG must be optimized sequentially in \emph{real time}. By contrast, previous models, such as \cite{xu2015exploring}, decide the signaling schemes for all targets simultaneously in an offline fashion.  The second is in how private information is leveraged. In previous models, the defender utilizes the informational advantage that the defender \emph{currently} has (e.g., knowledge about the realized protection status of the target) to deceive the attacker. However, in our scenario, the auditor first decides the signaling scheme, by when he/she has an equal amount of information as the attacker (which includes the  status of the current environment), and then exercises her informational advantage \emph{after} the audit cycle ends (by deciding which to audit).

\section{Optimizing SAGs}\label{sec:alg}

In this section, we design an algorithm for solving SAGs. For presentation purpose, we  fix the alert $\tau$ to a particular type $t$ and, thus, the superscript will, at times, be omitted for notational convenience. We begin by considering the problem of computing the real time SSE of the game without signaling that transpires for a given observed alert $\tau$. This game, as well as its solution, serve as a baseline of the optimized SAGs.

\subsection{Online SSG}\label{online_SSE}


Consider the arrival of an alert $\tau$.
Let $d^{t}_{\tau}$ be the number of future alerts of type $t\in T$ after alert $\tau$ is triggered.\footnote{The vast majority of alerts are false positives. Consequently, we can estimate $d^{t}_{\tau}$ from alert logs in previous audit cycles.}
We assume that $d^{t}_{\tau}$ follows a Poisson distribution $\bm D^{t}_{\tau}$, which is widely adopted to characterize the number of arrivals.
We can compute the SSE strategy using a multiple linear programming (LP) approach for budget $B_\tau$. In this approach, for each alert type $t$, we assume that $t$ is the attacker's best response, and then compute the optimal auditing strategy.
Finally, we choose the best solution (in terms of  the auditor's utility) among all of the LPs  as the SSE strategy.

Now, let $\theta^{t'}(t)$ be the probability of auditing an alert of type $t'$ when the attacker's best response is $t$.  
In addition to this optimal auditing policy, we design how we plan to split the remaining budget $B_\tau$ among all alert types.
We assume that the audit distribution will remain constant for future alerts, which allows us to consider the long-term impact of our decision about auditing. 
We represent the budget that we allocate for inspecting alerts of each type as a vector  $\bm B_{\tau} = \{B^{1}_{\tau},B^{2}_{\tau},...,B^{|T|}_{\tau}\}$ that the long-term budget allocation decision is constrained by the remaining audit budget: $\sum_{t=1}^{|T|} B^{t}_{\tau} \le B_{\tau}$. Now, assuming type $t$ is the best response, the following LP returns optimal auditing strategy:
\begin{equation}\label{SSG_SSE}
\begin{array}{l}
\quad \quad \max_{\bm B_{\tau}}   \theta^{t}(t)\cdot U^{t}_{d,c} + (1-\theta^{t}(t) ) \cdot U^{t}_{d,u}\\
 s.t.  \\
 \quad \quad \quad \forall t',   \quad  \theta^{t}(t)  \cdot  U^{t}_{a,c} + (1-\theta^{t}(t) ) \cdot U^{t}_{a,u}  \\[1pt]
 \quad \quad  \quad \quad \quad \quad  \quad \quad  \ge \theta^{t'}(t) \cdot U^{t'}_{a,c} + (1-\theta^{t'}(t)  ) \cdot U^{t'}_{a,u} ,\\[5pt]
 \quad \quad  \quad \forall t',   \quad \theta^{t'}(t) = \mathbb{E}_{d^{t'}_{\tau} \sim \bm D^{t'}_{\tau}}\left(\frac{B^{t'}_{\tau} }{ V^{t'} d^{t'}_{\tau}}\right) ,\\[5pt]
\quad  \quad  \quad  \quad \quad \quad \sum_{t'=1}^{|T|} B^{t'}_{\tau} \le B_{\tau} , \quad  \quad  \quad  \quad\\[5pt]
 \quad  \quad \quad  \forall t', \quad B^{t'}_{\tau} \in [0, B_{\tau}],\\
\end{array}
\end{equation}
where the first constraint ensures that $t$ is the attacker's best response.
After solving $|T|$ instances of LP \eqref{SSG_SSE}, the best solution for the auditor will henceforth be referred to as the \emph{online SSE strategy} (or simply, the \emph{SSE}), $\bm{\theta}_{SSE}$.

\subsection{Optimal Signaling}

We now describe how to build a signaling mechanism into the audit game and then compute the optimal signaling scheme, as well as the budget allocation strategy.

From the perspective of the attacker, whether to \emph{proceed} or \emph{quit} after receiving a warning signal depends on his conditional expected utility: $$\mathbb{E}^{t}_a(util  | \xi_1) = \frac{p^t_1}{p^t_1+q^t_1} \cdot U^{t}_{a,c} + \frac{q^t_1}{p^t_1+q^t_1} \cdot U^{t}_{a,u}. $$  
We impose the constraint $\mathbb{E}^{t}_a(util  | \xi_1) \le 0$ such that the attacker's best response to $\xi_1$ is to quit, in which case both players will receive $0$ utility. We do not enforce constraints for $\xi_0$ because the potential attacker does not have any option but to proceed. In this case, the expected utility of the auditor is  
$$\mathbb{E}^{t}_d(util  | \xi_0) = \frac{p^t_0}{p^t_0+q^t_0} \cdot U^{t}_{d,c} + \frac{q^t_0}{p^t_0+q^t_0} \cdot U^{t}_{d,u}. $$   
Overall, the expected utility for the attacker can be computed as  $$\mathbb{E}^{t}_a(util ) = (p^t_0 + q^t_0) \cdot \mathbb{E}^{t}_a(util |\xi_0) =  p^t_0 \cdot U^{t}_{a,c} + q^t_0 \cdot U^{t}_{a,u}.$$ Accordingly, the auditor's expected utility is $$\mathbb{E}^{t}_d(util ) = (p^t_0 + q^t_0) \cdot \mathbb{E}^{t}_d(util |\xi_0) =  p^t_0 \cdot U^{t}_{d,c} + q^t_0 \cdot U^{t}_{d,u}.$$ 

However, a side effect is that, the warnings sent by the auditor (e.g., the pop-up warning screen off of \emph{break-the-glass} strategy deployed by VUMC) may pose an additional utility loss to the auditor in practice, which we call \emph{usability cost}. This is because when normal users request access to sensitive data and receive a warning message, they may walk away by choosing quit instead of ``Proceed'', which induces a loss in operational efficiency for the organization.  For each type $t'$, we set this loss to be proportional to the product of the probability of sending warnings $p^{t'}_1 + q^{t'}_1$, the probability of being deterred $P^{t'}$ and the expectation of the number of future alerts to the end of the current audit cycle $E^{t'}_{\tau} =\mathbb{E}_{d^{t'}_{\tau} \sim \bm D^{t'}_{\tau}}(d^{t'}_{\tau})$, which can be estimated from historical data collected in previous audit cycles. The loss incurred for each quit by a normal user is set to be $C_{t'} (<0)$. Then, the expected utility of the auditor can be updated as $\mathbb{E}^{t}_d(util) = p^t_0 \cdot U^{t}_{d,c} + q^t_0 \cdot  U^{t}_{d,u} + \sum_{t'=1}^{|T|} (p^{t'}_1 + q^{t'}_1) \cdot P^{t'} \cdot E^{t'}_{\tau} \cdot C_{t'}$. Note that here we use $E^{t'}_{\tau}$ to estimate the  number of false-positive alerts in the future. This is reasonable because attacks are extremely rare in practice.


The optimal signaling scheme (or, more concretely, joint signaling and audit probabilities) can be computed through the following set of LPs:
\begin{equation}\label{basic_game}
\begin{aligned}
\max\limits_{ \bm{p_0, p_1, q_0, q_1}, \bm B_{\tau}}  p^t_0 \cdot U^{t}_{d,c} + q^t_0 \cdot  U^{t}_{d,u} + \sum_{t'=1}^{|T|} (p^{t'}_1 + q^{t'}_1) \cdot P^{t'} \cdot E^{t'}_{\tau} \cdot C_{t'} \\[-3pt]
s.t.  ~~~~~~~~~~~~~~~~~~~~~~~~~~~~~~~~~~~~~~~~~~~~~~~~~~~~~~~~~~~~~~~~~~~   \\[-3pt]
\quad \quad  \forall t',   \quad  p^{t}_0  \cdot  U^{t}_{a,c} + q^{t}_0  \cdot U^{t}_{a,u} \ge p^{t'}_0 \cdot U^{t'}_{a,c} + q^{t'}_0 \cdot U^{t'}_{a,u}, ~~~~~~\\[1pt]
 \quad \quad \forall t',   \quad p^{t'}_1  \cdot  U^{t'}_{a,c} + q^{t'}_1  \cdot U^{t'}_{a,u} \le  0,~~~~~~~~~~~~~~~~~~~~~~~~~~\\[1pt]
 \quad \quad \forall t',   \quad p^{t'}_1  + p^{t'}_0  = \mathbb{E}_{d^{t'}_{\tau} \sim \bm D^{t'}_{\tau}}\left(\frac{B^{t'}_{\tau} }{ V^{t'} d^{t'}_{\tau}}\right), ~~~~~~~~~~~~~~~~~~~~\\[-3pt]
 \quad \quad \forall t',   \quad p^{t'}_1  + p^{t'}_0 + q^{t'}_1  + q^{t'}_0  = 1, ~~~~~~~~~~~~~~~~~~~~~~~~~~~\\[0pt]
\quad  \quad   \sum_{t'\in \{1,...,|T|\}} B^{t'}_{\tau} \le B_{\tau}, ~~~~~~~~~~~~~~~~~~~~~~~~~~~~~~~~~~~ \\[1pt]
\quad  \quad  \forall t',  \quad B^{t'}_{\tau} \in [0, B_{\tau}], ~~~~~~~~~~~~~~~~~~~~~~~~~~~~~~~~~~~~~~\\[1pt]
\quad  \quad  \forall t',  \quad p^{t'}_0, q^{t'}_0, p^{t'}_1, q^{t'}_1 \in [0,1], ~~~~~~~~~~~~~~~~~~~~~~~~~~~\\
\end{aligned}
\end{equation}
where we assume type $t$ is the best one for the attacker to potentially exploit. Note that, in the objective function, the incurred additional loss is an accumulated value that considers the amount of time remaining in the period for the current audit cycle. The likelihood of sending warning signal in the current time point is a real time estimation of future warnings. Our goal is thus to find the optimal signaling scheme for all types, and simultaneously, the best budget allocation strategy. We use $\bm{p_0}$,  $\bm{p_1}$, $\bm{q_0}$ and $\bm{q_1}$ to denote the warning signaling scheme for all types, namely, the set $\{p_0^{t'} | \forall t'\}$, $\{p_1^{t'} | \forall t'\}$, $\{q_0^{t'} | \forall t'\}$ and $\{q_1^{t'} | \forall t'\}$, respectively. 

The first constraint in LP \eqref{basic_game} ensures that attacking type $t$ is the best response strategy for the attacker. The second constraint indicates that the attacker, when receiving a warning signal, will quit attacking any type. 
We refer to the optimal solution among the $|T|$ instances of LP \eqref{basic_game}  as the \emph{Online Stackelberg Signaling Policy (OSSP)}. In particular, we use $\bm{\theta}_{ossp}$ to denote the vector of coverage probability at OSSP.

After building the theoretical model of the SAG, we need to pay attention to one important situation in practice, where an attacker can leverage to perform attacks with lower level risks of being captured.

\subsection{The Ending Period of Audit Cycles}

Recall that in SAGs, the estimation of the number of alerts in the rest of the current audit cycle, which is $\mathbb{E}_{d^{t}_{\tau} \sim \bm D^{t}_{\tau}}(d^{t}_{\tau})$, is calculated based on the alert logs of historical audit cycles. At the ending period of audit cycles, such estimation keeps decreasing for each type.  As a consequence, it would be ill-advised to apply any approach that performs an estimation on the arrivals without an additional process to handle the ending period of an audit cycle. Imagine, for instance, an attacker who only attacks at the very end of an audit cycle. Then, the knowledge from historical data is likely to indicate that no alerts will be realized in the future. And it follows that such attacks will not be covered because the available budget will have been exhausted according to the historical information. 

To practically mitigate this problem, when the mean of arrivals in the historical data drops under a certain threshold, we apply the estimate of the number of future alerts $\mathbb{E}_{d^{t}_{\tau-1} \sim \bm D^{t}_{\tau-1}}(d^{t}_{\tau-1})$ in the time point when the last alert was triggered as a proxy of the real one at the current time point. This technique is called \emph{knowledge rollback}. By doing so, the consumption of the available budget in real time will be slowed down because of the application of a smaller coverage probability. As a consequence, the attacker attempting to attack late is not afforded an obvious extra benefit.

\section{Theoretical Properties of SAGS}

In this section, we theoretically analyze the properties of the OSSP solution (equivalently, of the SAG equilibrium). Our first result highlights a notable property of the optimal signaling scheme. Specifically, the optimal signaling scheme will only trigger warning signals for the best attacking type, i.e., the type at which attacker utility is maximized. As such, the rational attacker will choose to attack this alert type. 

\begin{theorem}\label{sig_zero}
If alert $\tau_{*}$ of type $t_{*}$ is the best response strategy for the attacker, then  $p_1^{t}=q_1^{t}=0$ in the OSSP for $\forall t \neq t_{*}$.\footnote{We will use $*$ to denote strategies or quantities in the OSSP in the rest of the paper.}
\end{theorem}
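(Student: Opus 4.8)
The plan is to argue directly from the structure of LP \eqref{basic_game}. Since the OSSP is the best of the $|T|$ LP instances and the attacker best-responds with type $t_*$ at the OSSP, the OSSP is feasible for, hence an optimal solution of, the instance of LP \eqref{basic_game} in which the designated best type is $t=t_*$; I would fix that instance and take an arbitrary optimal solution $(\bm{p_0},\bm{p_1},\bm{q_0},\bm{q_1},\bm B_\tau)$ of it. The core claim to establish is that any warning mass placed on an off-type can be shifted into the silent branch without losing either feasibility or objective value, so suppose toward a contradiction that $p_1^{t'}+q_1^{t'}>0$ for some $t'\neq t_*$.

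I would then define a modified solution identical to the original except that, for the single type $t'$, we set $\tilde p_1^{t'}=\tilde q_1^{t'}=0$, $\tilde p_0^{t'}=p_0^{t'}+p_1^{t'}$ and $\tilde q_0^{t'}=q_0^{t'}+q_1^{t'}$, leaving every $B^{t''}_\tau$ and every variable of every other type untouched. This preserves the marginal audit probability $p_1^{t'}+p_0^{t'}$ and the normalization $p_1^{t'}+p_0^{t'}+q_1^{t'}+q_0^{t'}=1$, so the third through sixth constraints of LP \eqref{basic_game} and all box constraints remain satisfied (each new variable is a sum of nonnegatives bounded by the total $1$), and the second constraint for $t'$ reduces to $0\le 0$. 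The only constraint needing care is the attacker's best-response constraint, and here I would invoke the original second constraint $p_1^{t'}U^{t'}_{a,c}+q_1^{t'}U^{t'}_{a,u}\le 0$ to note that the attacker's proceed-utility for $t'$ can only drop:
\[
\tilde p_0^{t'}U^{t'}_{a,c}+\tilde q_0^{t'}U^{t'}_{a,u}
=\big(p_0^{t'}U^{t'}_{a,c}+q_0^{t'}U^{t'}_{a,u}\big)+\big(p_1^{t'}U^{t'}_{a,c}+q_1^{t'}U^{t'}_{a,u}\big)
\le p_0^{t'}U^{t'}_{a,c}+q_0^{t'}U^{t'}_{a,u}.
\]
Since the $t_*$-variables are unchanged and the original solution satisfied $p_0^{t_*}U^{t_*}_{a,c}+q_0^{t_*}U^{t_*}_{a,u}\ge p_0^{t'}U^{t'}_{a,c}+q_0^{t'}U^{t'}_{a,u}$, the best-response constraint still holds for $t'$, and trivially for every other type, so the modified solution is feasible.

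Finally I would compare objectives: the term $p_0^{t_*}U^{t_*}_{d,c}+q_0^{t_*}U^{t_*}_{d,u}$ is unchanged, and in the usability-cost sum only the $t'$ summand changes, from $(p_1^{t'}+q_1^{t'})P^{t'}E^{t'}_\tau C_{t'}$ to $0$; since $C_{t'}<0$ and $P^{t'},E^{t'}_\tau\ge 0$ this is a non-decrease, and a strict increase whenever $P^{t'}E^{t'}_\tau>0$. Under this mild, practically always satisfied genuineness condition, strict improvement contradicts optimality, so no optimal solution of the $t_*$-instance — in particular not the OSSP — can have $p_1^{t'}+q_1^{t'}>0$, giving $p_1^{t}=q_1^{t}=0$ for all $t\neq t_*$; without the condition, the shift still exhibits an optimal OSSP with this property. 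The single delicate step is checking that the best-response constraint survives the shift, which is precisely where the ``quit-after-warning'' constraint $p_1^{t'}U^{t'}_{a,c}+q_1^{t'}U^{t'}_{a,u}\le 0$ does the work; everything else is bookkeeping on the four joint probabilities and the budget vector.
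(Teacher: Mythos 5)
Your proof is correct and follows essentially the same route as the paper's: fix the LP instance with designated best type $t_*$, shift each off-type's warning mass into the silent branch, use the quit-after-warning constraint $p_1^{t'}U^{t'}_{a,c}+q_1^{t'}U^{t'}_{a,u}\le 0$ to preserve the best-response constraint, and observe the usability-cost terms can only improve. Your added remark distinguishing weak from strict improvement (via $P^{t'}E^{t'}_{\tau}>0$) is a small but welcome refinement over the paper's argument, which only exhibits \emph{an} optimal solution with the stated property.
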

\begin{proof}

	Let $Sol = \{ p_0^t, p_1^t, q_0^t, q_1^t \}_{t \in T}$ be any optimal solution and $t_*$ is the best type. We show that the following newly defined variables will not decrease the objective value of $Sol$ and thus, by assumption, is still optimal. Let $\bar{p}_0^{t_*} = p_0^{t_*} , \bar{p}_1^{t_*} = p_1^{t_*}, \bar{q}_0^{t_*} = q_0^{t_*}, \bar{q}_0^{t_*} = q_0^{t_*}$  be the same as in $Sol$, however for any $t \not = t_*$, define $\bar{p}_0^{t} = p_0^{t} + p_1^t , \bar{q}_0^{t} = q_0^{t} + q_1^t$ and $ \bar{p}_1^{t} = 0, \bar{q}_1^{t} = 0$. 
	
	First, we argue that these newly defined variables are still feasible. All of the constraints can easily be verified in LP \eqref{basic_game} except the first two sets. The second set of constraints is still satisfied for any $t \not = t_*$ (where our variables changed) since $\bar{p}_1^t = \bar{q}_1^t = 0$. The first set of constraints are satisfied for any $t \not = t_*$ because 	
	\begin{eqnarray*}
		\bar{p}_0^t\cdot U_{a,c}^t + \bar{q}_0^t \cdot U_{a,u}^t &=&  (p_0^t + p_1^t)\cdot U_{a,c}^t + (q_0^t + q_1^t) \cdot U_{a,u}^t  \\
		& \leq & p_0^{t_*} \cdot U_{a,c}^t + q_0^{t_*}  \cdot U_{a,u}^t  \\
		& = & \bar{p}_0^{t_*} U_{a,c}^t + \bar{q}_0^{t_*} U_{a,u}^{t_*} 
	\end{eqnarray*}
	where the (only) inequality is due to $ p_1^t \cdot U_{a,c}^t + q_1^t \cdot U_{a,u}^t \leq 0$ as a constraint of LP \eqref{basic_game} and the two equations are by our definition of the new variables. This proves that the first constraint is also feasible.
	
	It remains to show that the newly defined variables do not decrease the objective function. This follows simply because the term with respect to type $t_*$ in the objective function does not change and all the other terms become zero in the newly defined variables, which is no less than the original cost. This proves the theorem. \qedhere

\end{proof}


Theorem \ref{sig_zero} leads to the following corollary:
when the attacker avoids attacking certain type(s) at any time point (this is always the case in OSSP), then the best strategy for the auditor is to turn off the signaling procedure for those types for less loss incurred by sending warnings. Now we show that, at any given game status, the marginal coverage probability for OSSP is the same as the one for the online SSE.

\begin{theorem}\label{SSE_peSSE_equivalence}
Let $\theta^t_{ossp}$ be the marginal coverage probability in the OSSP at any given game status and $\theta^t_{SSE}$ be the corresponding marginal coverage probability in the online SSE. Then, in a SAG, for each type $t\in T$,  
$\theta^t_{ossp} = \theta^t_{SSE}$. 
\end{theorem}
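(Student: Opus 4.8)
The plan is to reduce the statement to a family of ordinary Stackelberg problems and then argue by strict monotonicity of the attacker's payoff together with a budget-counting argument, closely mirroring the structure used for the marginal-equivalence claim in the commented derivation. First I would condition on a fixed realization of the future-alert multiset (each type $t$ occurring $d^{t}_{\tau}$ times); since both the online SSE of LP~\eqref{SSG_SSE} and the OSSP of LP~\eqref{basic_game} are defined through expectations over the Poisson estimates $\bm D^{t}_{\tau}$, it suffices to prove $\theta^{t}_{ossp}=\theta^{t}_{SSE}$ for every fixed realization and take expectations at the end. For a fixed realization the auditor's problem is a finite Stackelberg game, and by Theorem~\ref{sig_zero} any optimal signaling scheme turns warnings on only for the attacker's best-response type $t_{*}$; consequently, for every $t'\neq t_{*}$ one has $p^{t'}_{1}=q^{t'}_{1}=0$, so the feasible region of LP~\eqref{basic_game} restricted to such a type collapses to the no-signaling form (marginal coverage $\theta^{t'}=p^{t'}_{0}$ with $p^{t'}_{0}+q^{t'}_{0}=1$), exactly matching LP~\eqref{SSG_SSE}.

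Next I would fix notation and establish the structural lemmas. Write $\mathbb{E}^{t}_{a}(\theta)=\theta\,U^{t}_{a,c}+(1-\theta)\,U^{t}_{a,u}$ for the attacker's payoff when a type-$t$ alert is audited with marginal probability $\theta$ and the attacker proceeds; this is strictly decreasing in $\theta$ because $U^{t}_{a,c}<0<U^{t}_{a,u}$. Let $\mathbb{E}^{sse}_{a}$ and $\mathbb{E}^{ossp}_{a}$ denote the attacker's equilibrium payoffs in the two solutions. The first key step is to show that in the OSSP the best-response type $t_{*}$ carries positive coverage and that $\mathbb{E}^{ossp}_{a}=\mathbb{E}^{t_{*}}_{a}(\theta^{t_{*}}_{ossp})$: if some alert with positive coverage were not a best response, shifting its allocated budget onto a best-response alert strictly increases the auditor's payoff (more coverage on the attacked type raises the auditor's term since $U^{t}_{d,c}\ge 0>U^{t}_{d,u}$, and by Theorem~\ref{sig_zero} this reshuffling among non-$t_{*}$ types triggers no extra usability cost), contradicting optimality; and the equality itself follows once one observes that at an optimum the deterrence constraint $p^{t_{*}}_{1}U^{t_{*}}_{a,c}+q^{t_{*}}_{1}U^{t_{*}}_{a,u}\le 0$ is tight, so the attacker's realized payoff from the silent branch equals $\mathbb{E}^{t_{*}}_{a}(\theta^{t_{*}}_{ossp})$. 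The identical redistribution argument applied to LP~\eqref{SSG_SSE} yields $\mathbb{E}^{sse}_{a}=\mathbb{E}^{t}_{a}(\theta^{t}_{SSE})$ for every best-response type $t$, and in both solutions the full budget $B_{\tau}$ is allocated.

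With these in hand, the second step is to show $\mathbb{E}^{sse}_{a}=\mathbb{E}^{ossp}_{a}$. Suppose toward a contradiction that $\mathbb{E}^{sse}_{a}>\mathbb{E}^{ossp}_{a}$. Then for every type $t$ with $\theta^{t}_{SSE}>0$ we must have $\theta^{t}_{ossp}>\theta^{t}_{SSE}$, since $\theta^{t}_{ossp}\le\theta^{t}_{SSE}$ would give $\mathbb{E}^{ossp}_{a}\ge\mathbb{E}^{t}_{a}(\theta^{t}_{ossp})\ge\mathbb{E}^{t}_{a}(\theta^{t}_{SSE})=\mathbb{E}^{sse}_{a}$; symmetrically $\theta^{t}_{ossp}>0$ forces $\theta^{t}_{ossp}>\theta^{t}_{SSE}$. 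Hence on each type either both coverages vanish or $\theta^{t}_{ossp}>\theta^{t}_{SSE}$, so the total budget used by the OSSP strictly exceeds that used by the SSE, contradicting that both equal $B_{\tau}$. The case $\mathbb{E}^{sse}_{a}<\mathbb{E}^{ossp}_{a}$ is symmetric, so $\mathbb{E}^{sse}_{a}=\mathbb{E}^{ossp}_{a}$. Finally, given this equality, strict monotonicity of $\mathbb{E}^{t}_{a}(\cdot)$ and the characterization above pin down $\theta^{t}_{ossp}=\theta^{t}_{SSE}$ type by type: $\theta^{t}_{ossp}>\theta^{t}_{SSE}$ would force $\mathbb{E}^{ossp}_{a}=\mathbb{E}^{t}_{a}(\theta^{t}_{ossp})<\mathbb{E}^{t}_{a}(\theta^{t}_{SSE})=\mathbb{E}^{sse}_{a}$, and the reverse inequality is symmetric. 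Taking expectations over the Poisson estimates delivers the theorem.

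I expect the main obstacle to be the first step, specifically the claim that $\mathbb{E}^{ossp}_{a}$ coincides with the ``no-signaling reading'' $\mathbb{E}^{t_{*}}_{a}(\theta^{t_{*}}_{ossp})$ of the OSSP marginal coverage. This is the one place where the warning branch, the attacker's quit decision, and the usability-cost term genuinely interact, and it requires both Theorem~\ref{sig_zero} (to confine all signaling to $t_{*}$) and an argument that the deterrence constraint on $t_{*}$ is tight at optimum; once this bookkeeping is settled, the monotonicity and budget-counting steps are routine. A secondary subtlety is the budget-exhaustion claim: reallocating spare budget could in principle change the attacker's best-response set, so the reallocation must be directed specifically onto a best-response alert.
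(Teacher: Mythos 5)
Your proposal follows essentially the same route as the paper's proof: condition on a fixed realization of future alerts, use the budget-redistribution argument to show every positively covered type is an attacker best response (so the first constraint of LP~\eqref{basic_game} is tight), and then combine strict monotonicity of $\mathbb{E}^t_a(\cdot)$ with the fact that both solutions allocate the same total budget $B_\tau$ to force $\mathbb{E}_a^{sse}=\mathbb{E}_a^{ossp}$ and hence type-by-type equality of the marginals. The additional care you take in invoking Theorem~\ref{sig_zero} and in flagging the tightness of the deterrence constraint on $t_*$ (so that the silent-branch payoff equals the no-signaling reading of the marginal coverage) is a refinement of a point the paper's proof glosses over, not a departure from its argument.
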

\begin{proof}
%
Given any game state, the auditor has an estimate about the sets of future alerts. We prove that for any fixed set of alerts,  $\theta^t_{ossp} = \theta^t_{SSE}$ holds for each type $t\in T$. As a result, in expectation over the probabilistic estimate, this still holds.  

Fixing a set of alerts, the auditor's decision is a standard Stackelberg game. 
We first claim that by fixing the auditing strategy in the OSSP, the attacker can receive $\mathbb{E}_a^{ossp}$ by triggering any alert $\tau$, thus type $t$. In other words, $\forall t \neq t_*, \mathbb{E}_a(\theta^{t}_{ossp})=\mathbb{E}_a^{ossp}$. Assume, for the sake of contradiction, that an alert $\tau'$ of type $t'$ with positive coverage probability is not the best response of the attacker in an SAG. Then, the auditor can redistribute a certain amount of the protection resources from $\tau'$ to the alerts of the attacker's best-response type and guarantee that it is still the best-response type. This increases the coverage probability of these alerts and, thus, increases the auditor's utility, which contradicts the optimality of OSSP. This implies that the first constraint in LP \eqref{basic_game} is \emph{tight} in the OSSP. Similarly, this holds true for the online SSE. Notice that $\mathbb{E}_a(\theta^t)$  is a strictly decreasing function of $\theta^t$ for both OSSP and online SSE.


Next, we prove that $ \mathbb{E}_a^{sse} = \mathbb{E}_a^{ossp}$ implies $\theta^t_{ossp} = \theta^t_{SSE}$ for all $\tau$, thus $t$, as desired.  
This is because $\theta^t_{ossp} >\theta^t_{SSE} (\geq 0)$ implies $\mathbb{E}_a^{ossp} =  \mathbb{E}_a(\theta^t_{ossp}) < \mathbb{E}_a(\theta^t_{SSE}) = \mathbb{E}_a^{sse}$ (a contradiction) and $\theta^t_{ossp} < \theta^t_{SSE}$ implies $\mathbb{E}_a^{ossp} \ge  \mathbb{E}_a(\theta^t_{ossp}) > \mathbb{E}_a(\theta^t_{SSE}) = \mathbb{E}_a^{sse}$ (again, a contradiction). 
As a result, it must be the case that $\theta^t_{SSE} = \theta^t_{ossp}$ for all $\tau$, and thus $t$, as desired.

We now show that $ \mathbb{E}_a^{sse} = \mathbb{E}_a^{ossp}$ must hold true. Assume, for the sake of contradiction, that $ \mathbb{E}_a^{sse} > \mathbb{E}_a^{ossp}$. Then for any $\theta^t_{SSE} >0$, it must be that $\theta^t_{ossp} > \theta^t_{SSE}$. This is because $\theta^t_{ossp} \leq \theta^t_{SSE}$ implies that $\mathbb{E}_a^{ossp} \geq \mathbb{E}_a(\theta^t_{ossp}) \geq \mathbb{E}_a(\theta^t_{SSE}) = \mathbb{E}_a^{sse}$, which is a contradiction. On the other hand, for any $\theta^t_{ossp} >0$, $\theta^t_{SAG} > \theta^t_{SSE}$ must be true, because $0 < \theta^t_{ossp} \leq \theta^t_{SSE}$ implies that $\mathbb{E}_a^{sse} = \mathbb{E}_a(\theta^t_{SSE}) \leq \mathbb{E}_a(\theta^t_{ossp}) = \mathbb{E}_a^{ossp}$, which is a contradiction. As  a result, it must be the case that either $\theta^t_{SSE} = \theta^t_{ossp} = 0$ or $\theta^t_{ossp} > \theta^t_{SSE}$ for any $\tau$, thus $t$. Yet this contradicts the fact that $\sum_{\tau} \theta^t_{SSE} = \sum_{\tau} \theta^t_{ossp} = B_{\tau}$. Similarly, $\mathbb{E}_a^{sse} < \mathbb{E}_a^{ossp}$ can not hold true. As a result, $ \mathbb{E}_a^{sse} = \mathbb{E}_a^{ossp}$ is true. \qedhere
\end{proof}

In the proof above, we can conclude that the attacker's utility is the same in the OSSP and the online SSE. We now prove that the signaling procedure never weakens the auditor's expected utility.

\begin{theorem}\label{SAG_no_worse}
Given any game state,  the expected utility of the auditor by applying the OSSP is never worse than when the online SSE is applied.
\end{theorem}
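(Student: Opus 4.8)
The plan is to exhibit the online SSE as a \emph{feasible} point of the linear program \eqref{basic_game} that defines the OSSP --- namely, as the degenerate signaling scheme that never issues a warning --- and to observe that its value under \eqref{basic_game} equals exactly the auditor's expected utility when the online SSE is played. Since the OSSP is, by definition, the optimum of \eqref{basic_game} (over the $|T|$ instances, one per assumed best-response type), this immediately gives $\mathbb{E}^{ossp}_d \ge \mathbb{E}^{sse}_d$ at the given game state.

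Concretely, fix the game state, i.e., the remaining budget $B_\tau$ and the arrival estimates $\bm D^{t'}_\tau$. Let $t_{SSE}$ be the attacker's best-response type under the online SSE and let $\{B^{t'}_\tau\}$ be the budget split returned by LP \eqref{SSG_SSE} for that instance, inducing marginal coverages $\theta^{t'}_{SSE}$. Consider the instance of LP \eqref{basic_game} whose assumed best type is $t = t_{SSE}$, and set, for every $t'$, $p^{t'}_1 = q^{t'}_1 = 0$, $p^{t'}_0 = \theta^{t'}_{SSE}$, $q^{t'}_0 = 1 - \theta^{t'}_{SSE}$, retaining the same $\{B^{t'}_\tau\}$. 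I would verify the constraints of \eqref{basic_game} one at a time: the warning constraint $p^{t'}_1 U^{t'}_{a,c} + q^{t'}_1 U^{t'}_{a,u} \le 0$ holds with equality; $p^{t'}_1 + p^{t'}_0 = \mathbb{E}(B^{t'}_\tau / (V^{t'} d^{t'}_\tau))$ reduces to the corresponding constraint of \eqref{SSG_SSE}; $p^{t'}_1 + p^{t'}_0 + q^{t'}_1 + q^{t'}_0 = 1$ holds by construction; the budget and box constraints are identical to those in \eqref{SSG_SSE}; and the best-response constraint becomes $\theta^{t}_{SSE} U^{t}_{a,c} + (1-\theta^{t}_{SSE}) U^{t}_{a,u} \ge \theta^{t'}_{SSE} U^{t'}_{a,c} + (1-\theta^{t'}_{SSE}) U^{t'}_{a,u}$ for all $t'$, which holds precisely because $t_{SSE}$ is the attacker's best response at the online SSE. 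Hence the candidate is feasible.

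Next I would evaluate the \eqref{basic_game} objective at this candidate. Since $p^{t'}_1 + q^{t'}_1 = 0$ for every $t'$, the usability-cost term $\sum_{t'} (p^{t'}_1 + q^{t'}_1) P^{t'} E^{t'}_\tau C_{t'}$ vanishes, so the objective collapses to $p^{t}_0 U^{t}_{d,c} + q^{t}_0 U^{t}_{d,u} = \theta^{t}_{SSE} U^{t}_{d,c} + (1-\theta^{t}_{SSE}) U^{t}_{d,u}$, which is exactly the auditor's expected utility under the online SSE at this game state. As the OSSP maximizes \eqref{basic_game} and this feasible point already attains the online-SSE value, we conclude $\mathbb{E}^{ossp}_d \ge \mathbb{E}^{sse}_d$, which is the claim.

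The main obstacle I expect is purely bookkeeping: checking that the ``no-warning'' slice of \eqref{basic_game} coincides constraint-for-constraint with \eqref{SSG_SSE} at the \emph{same} game state --- in particular that $(p^{t'}_0, p^{t'}_1) = (\theta^{t'}_{SSE}, 0)$ induces marginal coverage $\theta^{t'}_{SSE}$ --- and that restricting attention to the LP instance indexed by $t_{SSE}$ is legitimate. A minor point worth stating cleanly is that the objective of \eqref{basic_game} is, by construction, the auditor's realized expected utility for the corresponding policy, so that comparing optimal LP values is the same as comparing the two auditing policies.
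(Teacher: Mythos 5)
Your proposal is correct and follows essentially the same route as the paper's own proof: both arguments embed the online SSE as the degenerate no-warning point ($p_1^{t'}=q_1^{t'}=0$, $p_0^{t'}=\theta^{t'}_{SSE}$, $q_0^{t'}=1-\theta^{t'}_{SSE}$) of the OSSP linear program, note that the usability-cost term vanishes so the objective equals the online-SSE utility, and conclude by optimality of the OSSP. Your version is if anything more careful, since it verifies feasibility constraint by constraint and explicitly selects the LP instance indexed by the SSE best-response type, whereas the paper adds a separate (and not strictly necessary) case for when the attacker's expected utility is negative and no attack occurs.
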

\begin{proof}
If the attacker completes the attack, his expected utility by attacking type $t$ in SAG is $\mathbb{E}_a(\theta^t) = (p^{t}_1 + p^{t}_0) \cdot  U^{t}_{a,c}  + (q^{t}_1 + q^{t}_0) \cdot U^{t}_{a,u}$, where $\theta^{t}$ is the coverage probability of type $t$.
\begin{itemize}
\item If $\mathbb{E}_a(\theta^t) < 0 $, then the attacker will choose to not approach any target at the beginning, regardless of if there exists a signaling procedure. Thus, in both cases the auditor will achieve the same expected utility, which is $0$.
\item If $\mathbb{E}_a(\theta^t) \ge 0 $, then let $p^{t}_1 = 0$ and $q^{t}_1 = 0$. And it follows that $p^{t}_0 = \theta^{t}$ and $q^{t}_0 = 1-\theta^{t}$. This solution satisfies all of the constraints in LP \eqref{basic_game}, which, in this case, share exactly the same form with LP \eqref{SSG_SSE}. In combination with Theorem \ref{sig_zero}, we can conclude that in this special setting, the expected utilities of the auditor, by applying SAG (not necessary the OSSP) and online SSE, are the same: $\mathbb{E}_d(\theta^t ) =  \theta^{t} \cdot  U^{t}_{d,c}  + (1-\theta^{t})\cdot U^{t}_{d,u}$.  Thus, the expected utility of the auditor in the OSSP is never worse than the one in the online SSE.\qedhere
\end{itemize}
\end{proof}

This begs the following question: can applying the OSSP bring more benefit to the expected utility of the auditor? Our experiments lend support to an affirmative answer.

Our next result reveals an interesting property about the optimal signaling scheme. Interestingly, it turns out that by applying OSSP in specific situations, if there is no warning sent, then the auditor will not audit the triggered alerts in their optimal strategy (i.e., $p^{t_*}_0 = 0$).

\begin{theorem}\label{prob_0}
In SAG, if the payoff structure satisfies $0 \ge (U^{t_*}_{d,c} - P^{t_*} \cdot E^{t_*}_{\tau} \cdot C_{t_*})/(U^{t_*}_{d,u} - P^{t_*} \cdot E^{t_*}_{\tau} \cdot C_{t_*}) \ge U^{t_*}_{a,c}/U^{t_*}_{a,u}$ on the best attacking type $t_*$ in the OSSP, then we have $p_0^{t_*} = 0$ on the $\tau$--th alert.
\end{theorem}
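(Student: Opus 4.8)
The plan is to exploit the structure of LP \eqref{basic_game} restricted to the best attacking type $t_*$, together with the characterization from Theorems \ref{sig_zero} and \ref{SSE_peSSE_equivalence}. By Theorem \ref{sig_zero}, in the OSSP all signaling variables for $t \neq t_*$ vanish, so the usability-cost term in the objective reduces to $(p_1^{t_*} + q_1^{t_*}) \cdot P^{t_*} \cdot E^{t_*}_{\tau} \cdot C_{t_*}$, and the auditor's objective becomes $p_0^{t_*} U^{t_*}_{d,c} + q_0^{t_*} U^{t_*}_{d,u} + (p_1^{t_*} + q_1^{t_*}) P^{t_*} E^{t_*}_{\tau} C_{t_*}$. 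The marginal coverage probability $\theta^{t_*} = p_1^{t_*} + p_0^{t_*}$ is fixed (it equals $\theta^{t_*}_{SSE}$ by Theorem \ref{SSE_peSSE_equivalence}), and so is the budget allocation; hence the only remaining freedom is how to split the mass between the warning branch and the silent branch. I would parametrize the feasible solutions on type $t_*$ by a single variable and show the objective is monotone in the direction that drives $p_0^{t_*}$ to $0$.

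Concretely, first I would fix $\theta := \theta^{t_*}$ and write $p_1^{t_*} = \theta - p_0^{t_*}$. The constraint $p_1^{t_*} + p_0^{t_*} + q_1^{t_*} + q_0^{t_*} = 1$ then gives $q_1^{t_*} + q_0^{t_*} = 1 - \theta$, so $p_1^{t_*} + q_1^{t_*} = (\theta - p_0^{t_*}) + q_1^{t_*}$. Next I would observe that, having fixed $p_0^{t_*}$, the auditor wants $q_0^{t_*}$ as large as possible (since $U^{t_*}_{d,u} < 0$ but it multiplies $q_0$ whereas the cost term is more negative — one must check signs carefully) OR, more precisely, wants the combination of the $q_0^{t_*} U^{t_*}_{d,u}$ term and the cost term optimized; in fact the binding constraint is the attacker's ``quit after warning'' constraint $p_1^{t_*} U^{t_*}_{a,c} + q_1^{t_*} U^{t_*}_{a,u} \le 0$. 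Since $U^{t_*}_{a,c} < 0 < U^{t_*}_{a,u}$, this forces $q_1^{t_*} \le p_1^{t_*} \cdot (-U^{t_*}_{a,c})/U^{t_*}_{a,u} = -p_1^{t_*} U^{t_*}_{a,c}/U^{t_*}_{a,u}$. I would argue the optimum makes this tight (larger $q_1$ is weakly better for the auditor given the objective, provided the cost coefficient sign works out), so $q_1^{t_*} = -p_1^{t_*} U^{t_*}_{a,c}/U^{t_*}_{a,u}$ and $q_0^{t_*} = 1 - \theta - q_1^{t_*}$.

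Substituting everything in terms of the single free variable $p_0^{t_*} \in [0, \theta]$, the objective becomes an affine function $A \cdot p_0^{t_*} + (\text{const})$, and the theorem's hypothesis $0 \ge (U^{t_*}_{d,c} - P^{t_*} E^{t_*}_{\tau} C_{t_*})/(U^{t_*}_{d,u} - P^{t_*} E^{t_*}_{\tau} C_{t_*}) \ge U^{t_*}_{a,c}/U^{t_*}_{a,u}$ is exactly the condition that the slope $A \le 0$ — so the auditor maximizes by taking $p_0^{t_*} = 0$. I would verify this by computing $A$: collecting the coefficient of $p_0^{t_*}$ from $p_0^{t_*} U^{t_*}_{d,c}$, from $q_0^{t_*} U^{t_*}_{d,u}$ (which contributes $-U^{t_*}_{d,u}$ plus a term from $q_1^{t_*}$'s dependence on $p_1^{t_*} = \theta - p_0^{t_*}$), and from $(p_1^{t_*} + q_1^{t_*}) P^{t_*} E^{t_*}_{\tau} C_{t_*}$, then showing $A \le 0 \iff$ the stated inequality chain. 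The main obstacle I anticipate is the bookkeeping of signs: $U^{t_*}_{d,u}$, $C_{t_*}$, and $U^{t_*}_{a,c}$ are all negative, while $U^{t_*}_{d,c} \ge 0$ and $U^{t_*}_{a,u} > 0$, so care is needed to confirm that (i) tightening the attacker's indifference constraint is indeed optimal for the auditor and (ii) the rearrangement of $A \le 0$ into the quotient form does not flip an inequality when multiplying through by the (negative) denominator $U^{t_*}_{d,u} - P^{t_*} E^{t_*}_{\tau} C_{t_*}$. Once the sign analysis is pinned down, the conclusion $p_0^{t_*} = 0$ is immediate from linearity.
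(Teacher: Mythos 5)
Your route is essentially the paper's: use Theorem~\ref{sig_zero} to zero out all types other than $t_*$, fix the marginal coverage $\theta^{t_*}$ via Theorem~\ref{SSE_peSSE_equivalence}, substitute $p_1^{t_*}=\theta^{t_*}-p_0^{t_*}$ and $q_1^{t_*}+q_0^{t_*}=1-\theta^{t_*}$, and read the hypothesis off as a slope condition (the paper does this geometrically in the $(p_0^{t_*},q_0^{t_*})$--plane, you do it algebraically after eliminating one more variable). Your slope computation does come out right: writing $W:=P^{t_*}E^{t_*}_{\tau}C_{t_*}$, the coefficient of $p_0^{t_*}$ after your elimination is $(U^{t_*}_{d,c}-W)-(U^{t_*}_{a,c}/U^{t_*}_{a,u})(U^{t_*}_{d,u}-W)$, which is $\le 0$ exactly when $(U^{t_*}_{d,c}-W)/(U^{t_*}_{d,u}-W)\ge U^{t_*}_{a,c}/U^{t_*}_{a,u}$ (the division by $U^{t_*}_{d,u}-W<0$ flips the inequality); and the first inequality in the hypothesis, $0\ge (U^{t_*}_{d,c}-W)/(U^{t_*}_{d,u}-W)$, is what guarantees $U^{t_*}_{d,u}-W<0$, hence that the coefficient of $q_1^{t_*}$ at fixed $p_0^{t_*}$, namely $W-U^{t_*}_{d,u}$, is positive, so pushing $q_1^{t_*}$ up is indeed the right direction.

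The genuine gap is your claim that the warning constraint $p_1^{t_*}U^{t_*}_{a,c}+q_1^{t_*}U^{t_*}_{a,u}\le 0$ is tight at the optimum. Tightness is equivalent to $p_0^{t_*}U^{t_*}_{a,c}+q_0^{t_*}U^{t_*}_{a,u}=\beta$, where $\beta:=\theta^{t_*}U^{t_*}_{a,c}+(1-\theta^{t_*})U^{t_*}_{a,u}$, i.e.\ $q_0^{t_*}=(\beta-p_0^{t_*}U^{t_*}_{a,c})/U^{t_*}_{a,u}$, and when $\beta<0$ this is negative for small $p_0^{t_*}$ --- infeasible. On that piece the binding constraint is instead $q_0^{t_*}\ge 0$ (so $q_1^{t_*}=1-\theta^{t_*}$), and there the reduced objective equals $p_0^{t_*}(U^{t_*}_{d,c}-W)+W$, which is \emph{increasing} in $p_0^{t_*}$ since $U^{t_*}_{d,c}\ge 0>W$; your monotonicity argument would then place the optimum at a strictly positive $p_0^{t_*}$. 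The paper closes this hole by splitting on the sign of $\beta$ and, crucially, by invoking the attacker-participation condition $p_0^{t_*}U^{t_*}_{a,c}+q_0^{t_*}U^{t_*}_{a,u}\ge 0$ (otherwise the attacker does not attack at all and the premise that $t_*$ is attacked is void); for $\beta\le 0$ this condition dominates the warning constraint, excludes the problematic region, and along its boundary the same slope computation again forces $p_0^{t_*}=0$ (with $q_0^{t_*}=0$). You need either this case split or that participation constraint to make your one-dimensional reduction valid for all payoff configurations covered by the theorem.
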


\begin{proof}
This will be proved in the instance of LP \eqref{basic_game} that derives the best pair of the signaling strategy and the attacking strategy $t_*$. For inference convenience, for all $t$ we substitute $p^t_1$ and $q^{t}_1$ with $\theta^{t}_{ossp} - p^{t}_0$ and $1-\theta^{t}_{ossp} - q^{t}_0$, respectively. Combining with Theorem \ref{sig_zero}, the objective function of LP \eqref{basic_game} can be simplified as $p^{t_*}_0 \cdot U^{t_*}_{d,c} + q^{t_*}_0 \cdot  U^{t_*}_{d,u} + p^{t_*}_1 \cdot P^{t_*} \cdot E^{t_*}_{\tau} \cdot C_{t_*} +  q^{t_*}_1 \cdot P^{t_*} \cdot E^{t_*}_{\tau} \cdot C_{t_*} = p^{t_*}_0 \cdot (U^{t_*}_{d,c} - P^{t_*} \cdot E^{t_*}_{\tau} \cdot C_{t_*}) + q^{t_*}_0 \cdot  (U^{t_*}_{d,u} - P^{t_*} \cdot E^{t_*}_{\tau} \cdot C_{t_*}) + P^{t_*} \cdot E^{t_*}_{\tau} \cdot C_{t_*}$. 

Now, we simplify constraints. The first constraint is always tight in the OSSP (as shown in Theorem \ref{SSE_peSSE_equivalence}).
By applying the substitution rules, the second constraint becomes $\forall t', ~p^{t'}_0 \cdot U^{t'}_{a,c} + q^{t'}_0 \cdot U^{t'}_{a,u} \ge \theta^{t'}_{ossp} \cdot U^{t'}_{a,c} + (1-\theta^{t'}_{ossp}) \cdot U^{t'}_{a,u}$. For all $t' \neq t_*$, it can be future transformed into $(\theta^{t'}_{ossp} - p^{t'}_0 ) \cdot U^{t'}_{a,c} + (1 - \theta^{t'}_{ossp} - q^{t'}_0) \cdot U^{t'}_{a,u} \le 0.$ Due to the fact that $p_1^{t'}=q_1^{t'}=0$ in the OSSP for $\forall t' \neq t_{*}$, $p^{t'}_0$ is equal to $\theta^{t'}_{ossp}$, and $q^{t'}_0$ equal to $1-\theta^{t'}_{ossp}$. As such, for all $t' \neq t_*$, this constraint naturally holds true. By far, the best strategy pair of SAG in our setting needs to maximize $p^{t_*}_0 \cdot (U^{t_*}_{d,c} - P^{t_*} \cdot E^{t_*}_{\tau} \cdot C_{t_*}) + q^{t_*}_0 \cdot  (U^{t_*}_{d,u} - P^{t_*} \cdot E^{t_*}_{\tau} \cdot C_{t_*}) + P^{t_*} \cdot E^{t_*}_{\tau} \cdot C_{t_*}$, such that  $p^{t_*}_0 \cdot U^{t_*}_{a,c} + q^{t_*}_0 \cdot U^{t_*}_{a,u} \ge \theta^{t_*}_{ossp} \cdot U^{t_*}_{a,c} + (1-\theta^{t_*}_{ossp}) \cdot U^{t_*}_{a,u}$ (we refer to this inequality as constraint $\alpha$) and that these probability variables are in $[0,1]$ and sum up to $1$.\footnote{Constraints involving $B^{t'}_{\tau}$ are neglected because $\theta^{t'}_{ossp}$ is the coverage probability that can be derived from $B^{t'}_{\tau}$ in our setting.} 

We set up a Cartesian coordinate system and let $q^{t_*}_0$ be the vertical axis and $p^{t_*}_0$ the horizontal one. Geometrically, the slopes of the item to be maximized, which is $- (U^{t_*}_{d,c}-P^{t_*} \cdot E^{t_*}_{\tau} \cdot C_{t_*}) / (U^{t_*}_{d,u}-P^{t_*} \cdot E^{t_*}_{\tau} \cdot C_{t_*})$ and constraint $\alpha$, which is $- U^{t_*}_{a,c}/U^{t_*}_{a,u} $ are both positive.  Note that, though we do not constrain the left side of constraint $\alpha$, which is $\mathbb{E}^{t_*}_a(util  | \xi_u) = p^{t_*}_0 \cdot  U^{t_*}_{a,c}  + q^{t_*}_0 \cdot U^{t_*}_{a,u} > 0$, this inequality is always true. If not the case, the attacker will not initially attack. We discuss the righthand side $\beta = \theta^{t_*}_{ossp} \cdot U^{t_*}_{a,c} + (1-\theta^{t_*}_{ossp}) \cdot U^{t_*}_{a,u}$ as follows.
\begin{itemize}
\item $\beta \le 0$.   In this setting, constraint $\alpha$ is dominated. The boundary of the dominant constraint passes the origin and the feasible region is a triangle with its base on the vertical axis, as shown in Figure \ref{feas_a1}. Thus, in both cases, if $(U^{t_*}_{d,c} - P^{t_*} \cdot E^{t_*}_{\tau} \cdot C_{t_*})/(U^{t_*}_{d,u} - P^{t_*} \cdot E^{t_*}_{\tau} \cdot C_{t_*}) \ge U^{t_*}_{a,c}/U^{t_*}_{a,u}$ holds true (which implies that the slope of the objective function is less than the boundary's slope of the dominant constraint), then $p^{t_*}_0 = q^{t_*}_0 = 0$ leads to the maximum of the objective function. The OSSP, thus is  $p^{t_*}_1 = \theta^{t*}_{ossp} , q^{t_*}_1 = 1- \theta^{t_*}_{ossp}, p^{t_*}_0 = q^{t_*}_0 = 0$.
\item $\beta > 0$. Thus, constraint $\alpha$ dominates $p^{t_*}_0 \cdot  U^{t_*}_{a,c}  + q^{t_*}_0 \cdot U^{t_*}_{a,u} > 0$. The boundary's intercept of the dominant constraint is $\delta  = (\theta^{t_*}_{ossp} \cdot U^{t_*}_{a,c} + (1-\theta^{t_*}_{ossp}) \cdot U^{t_*}_{a,u})/U^{t_*}_{a,u} \in (0,1]$. Using an analysis similar to the previous case of $\beta$, only when $p^{t_*}_0 = 0, q^{t_*}_0 = \delta$ does lead to the maximum of the objective function. This is indicated in Figure \ref{feas_a2}. The OSSP is $p^{t_*}_1 = \theta^{t_*}_{ossp}, p^{t_*}_0 = 0, q^{t_*}_1 = 1-\theta^{t_*}_{ossp} - (\theta^{t_*}_{ossp} \cdot U^{t_*}_{a,c} + (1-\theta^{t_*}_{ossp}) \cdot U^{t_*}_{a,u})/U^{t_*}_{a,u}, q^{t_*}_0 = (\theta^{t_*}_{ossp} \cdot U^{t_*}_{a,c} + (1-\theta^{t_*}_{ossp}) \cdot U^{t_*}_{a,u})/U^{t_*}_{a,u}.$ \qedhere
\end{itemize}
\end{proof}

\begin{figure}[h]%
\vspace{-4mm}
\centering
\label{feas_a}
\subfigure[$\beta \le 0$]{%
\label{feas_a1}%
\includegraphics[width=4.5cm]{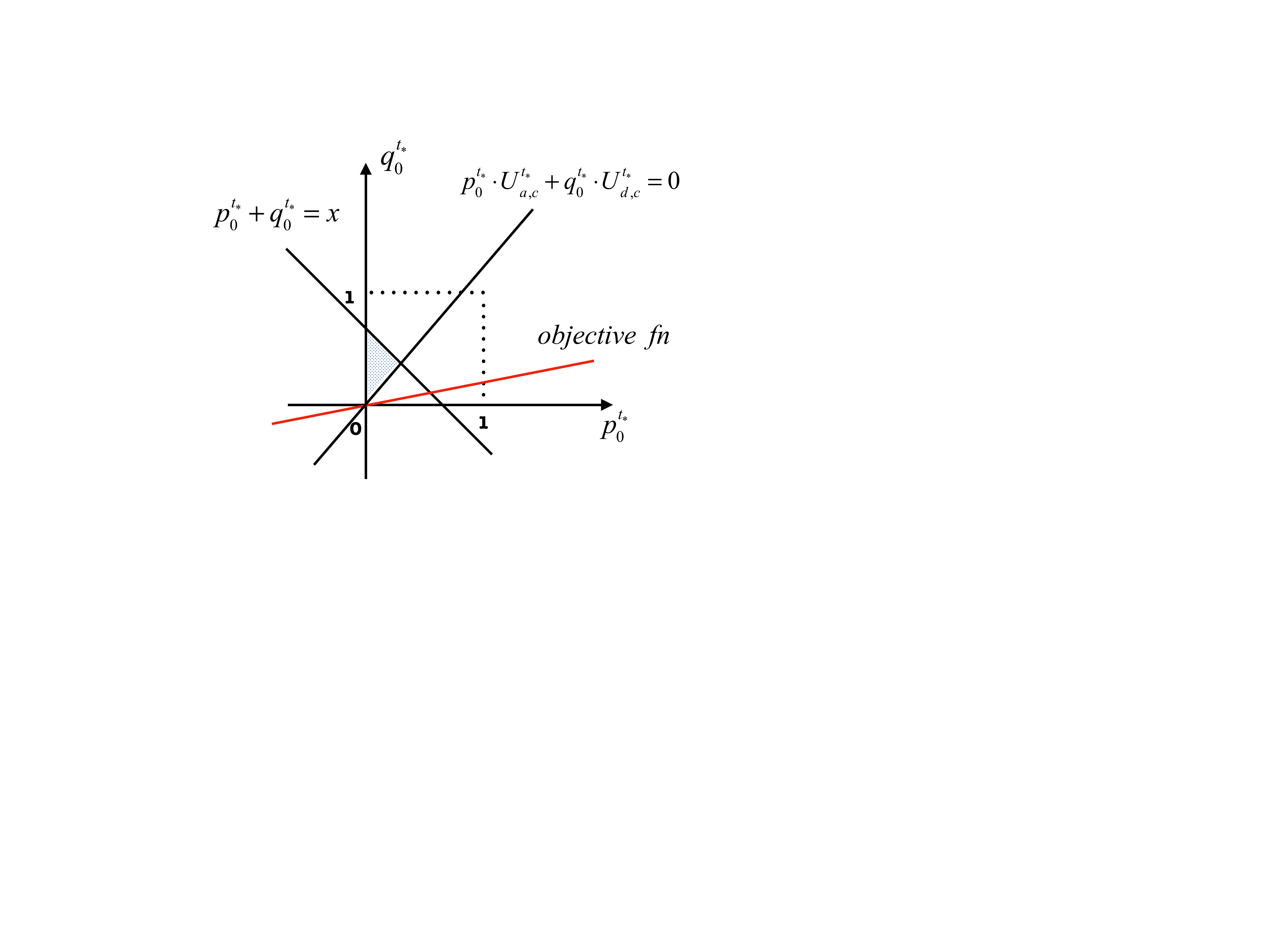}}%
\subfigure[$\beta > 0$]{%
\label{feas_a2}%
\includegraphics[width=4.5cm]{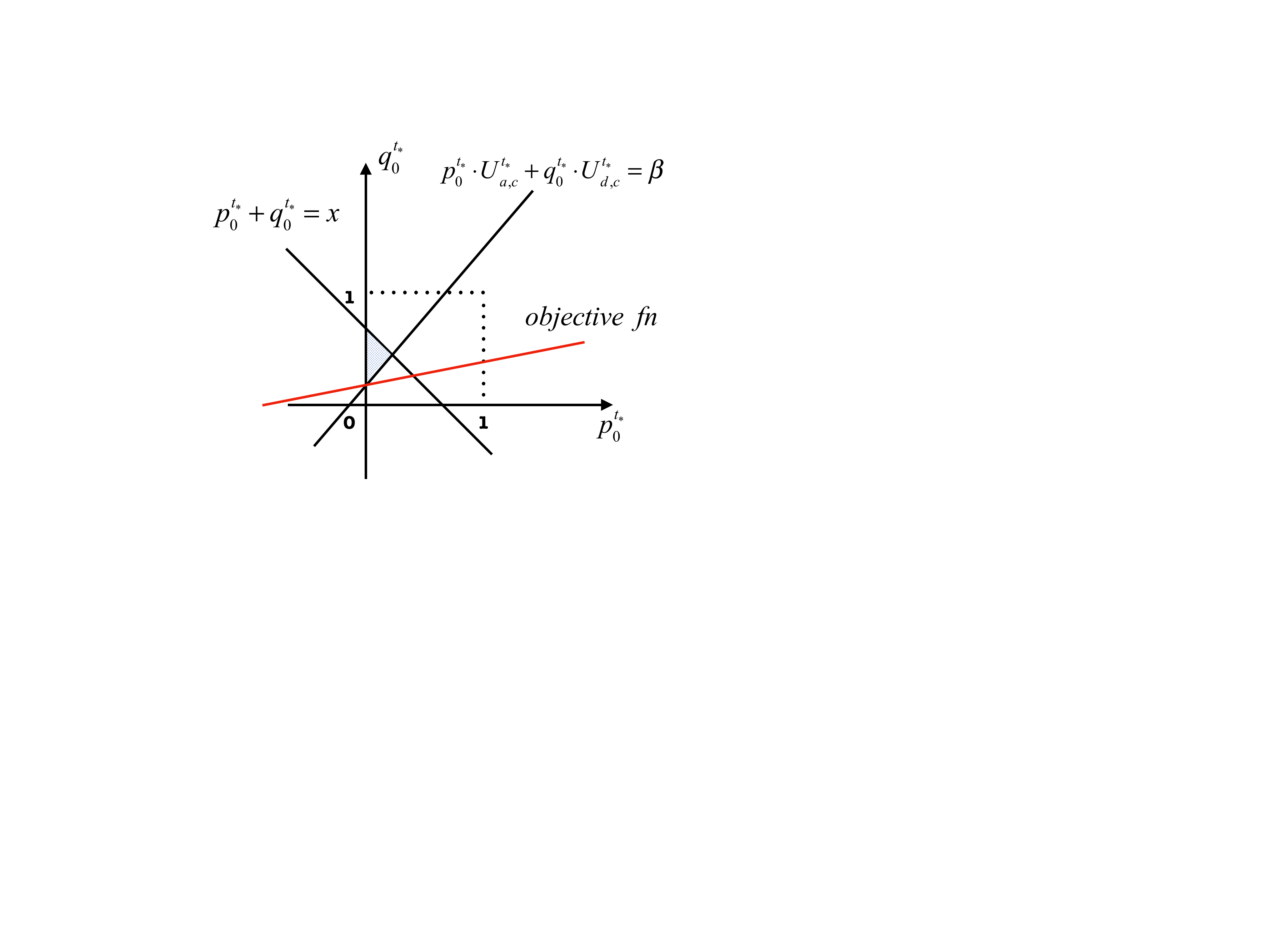}}%
\caption{Feasible regions (blue areas) and an objective function gaining the largest value for $\beta \le 0$ and $\beta>0$. Note that the boundary $p_0^{t_*} + q_0^{t_*} = x$ is only for illustration, and its intercept can slide in $[0,1]$ by taking into account the value of $p_1^{t_*}$ and $p_1^{t_*}$. However, this never impact the optimal solution.}
\end{figure}

\noindent\textbf{Remark. } In application domains, the absolute value of the penalty for the attacker is often greater than the benefit from committing attacks. As for the auditor, his/her benefit from catching an attack is often less than the absolute value of the loss due to missing an attack. If the warning cost $P^{t_*} \cdot E^{t_*}_{\tau} \cdot C_{t_*}$ were ignored, then $0 \ge U^{t_*}_{d,c}/U^{t_*}_{d,u} \ge U^{t_*}_{a,c}/U^{t_*}_{a,u}$ is often satisfied in practice. Considering that the warning cost is proportional to the estimation of the number of future warning events, which decreases with time, the condition in Theorem \ref{prob_0} only happens in a certain period of time.

One might wonder that, given that the condition in Theorem \ref{prob_0} is valid, whether the attacker can keep attacking until receiving no warning, in which case the attacker can attack safely under the optimal signaling scheme? Actually, this strategy cannot lead to success because once the attacker chooses to quit, his/her identity is essentially revealed. The auditor cannot punish the attacker (yet) because the attacker quits the attack, leaving no evidence. Therefore, a successful attack later on only hurts him/her, while help the auditor find forensic evidence of an attack. In practice, it is common that the auditor uses reserved budget to deal with special cases. In the setting above, the author can use a small portion of the auditing budget to investigate repeated attempts of data access, but in practice this is not an issue, as these cases are likely to be rare in real world.
As a result, once an attacker chooses to quit, the best response should be to not attack during the rest of the auditing cycle. In the experimental comparison with online/offline SSG, which requires no additional budget for such attack category, we will apply a reduced available budget as the input of the corresponding SAG to ensure fairness in our comparisons.

A natural follow-up question is can the attacker manipulate the model by running this strategy across audit cycles? The answer is no as well. Such a behavior can be easily detected by a rule that applies when the attacker performs his/her attack repeatedly. When the auditor does not send a warning, the attacker successfully attacks. Yet, since there was a warning sent previously, the auditor will use the probability $p_1$ to audit, rather than $p_0$. Thus, the attacker should take this into account before adopting such a strategy. 


\begin{theorem}\label{SSE_OSSP_Equal}
The auditor benefits equally in terms of the expected utility from SAG and online SSG at the $\tau$-th alert, if it satisfies $U^{t_*}_{d,u} >  P^{t_*} \cdot E^{t_*}_{\tau} \cdot C_{t_*}$, where $t_*$ is the best type to attack in the OSSP.
\end{theorem}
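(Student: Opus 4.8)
The plan is to bound $\mathbb{E}_d^{ossp}$, the auditor's expected utility at the $\tau$-th alert under the OSSP, from both sides by $\mathbb{E}_d^{sse}$, the corresponding utility under the online SSE. The lower bound $\mathbb{E}_d^{ossp}\ge \mathbb{E}_d^{sse}$ is already established in Theorem~\ref{SAG_no_worse}, whose proof also settles the degenerate case in which attacking is unprofitable (then both quantities equal $0$). So I restrict attention to the case where the attacker attacks; by Theorem~\ref{SSE_peSSE_equivalence} the per-type marginal coverage probabilities coincide in the two games, hence so do the attacker's type utilities, so the best response type $t_*$ fixed in the statement is a best response in both games.

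For the upper bound I would work inside the instance of LP~\eqref{basic_game} that designates $t_*$ as the best response, and reduce its objective to a linear function of just $p_0^{t_*}$ and $q_0^{t_*}$. Theorem~\ref{sig_zero} gives $p_1^{t'}=q_1^{t'}=0$ for all $t'\ne t_*$, so the usability-cost term collapses to $(p_1^{t_*}+q_1^{t_*})\,W$ with $W:=P^{t_*}E^{t_*}_{\tau}C_{t_*}\le 0$ (because $C_{t_*}<0$ and $P^{t_*},E^{t_*}_{\tau}\ge 0$). The coverage constraint $p_1^{t_*}+p_0^{t_*}=\theta^{t_*}_{ossp}$ together with the normalization $p_0^{t_*}+p_1^{t_*}+q_0^{t_*}+q_1^{t_*}=1$ give $p_1^{t_*}+q_1^{t_*}=1-p_0^{t_*}-q_0^{t_*}$, turning the objective into
\[ f(p_0^{t_*},q_0^{t_*}) = p_0^{t_*}(U_{d,c}^{t_*}-W) + q_0^{t_*}(U_{d,u}^{t_*}-W) + W, \]
and the same two constraints force the feasible region into the rectangle $p_0^{t_*}\in[0,\theta^{t_*}_{ossp}]$, $q_0^{t_*}\in[0,1-\theta^{t_*}_{ossp}]$.

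Now I invoke the hypothesis. Since $U_{d,c}^{t_*}\ge 0\ge W$ and, by assumption, $U_{d,u}^{t_*}>W$, both coefficients $U_{d,c}^{t_*}-W$ and $U_{d,u}^{t_*}-W$ of $f$ are nonnegative, so $f$ is nondecreasing in each variable. Therefore $\mathbb{E}_d^{ossp}$, being the maximum of $f$ over a subset of the rectangle, is at most $f(\theta^{t_*}_{ossp},1-\theta^{t_*}_{ossp}) = \theta^{t_*}_{ossp}U_{d,c}^{t_*}+(1-\theta^{t_*}_{ossp})U_{d,u}^{t_*}$. By Theorem~\ref{SSE_peSSE_equivalence}, $\theta^{t_*}_{ossp}=\theta^{t_*}_{SSE}$, so this upper bound is exactly the auditor's utility in the online SSE when the attacker plays $t_*$, which is at most $\mathbb{E}_d^{sse}$. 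Combining with Theorem~\ref{SAG_no_worse} forces all these inequalities to be equalities, giving $\mathbb{E}_d^{ossp}=\mathbb{E}_d^{sse}$.

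I expect the only delicate point to be the bookkeeping around best responses: one must use Theorem~\ref{SSE_peSSE_equivalence} to guarantee that the best response type and the marginal coverage are shared by the two games, so that the bound $f(\theta^{t_*}_{ossp},1-\theta^{t_*}_{ossp})$ can be identified with an online-SSE payoff; and one must track the sign conventions ($U_{d,c}^{t_*}\ge 0>U_{d,u}^{t_*}$, $W\le 0$) carefully to see that the stated condition is precisely what makes $f$ monotone (and, as the \emph{Remark} after Theorem~\ref{prob_0} indicates, the complementary regime $U_{d,u}^{t_*}<W$ is exactly where signaling does help). Once the substitution producing $f$ and the rectangle containment are in hand, the optimization itself is immediate — maximizing a monotone linear functional over a box — so there is no real computational hurdle.
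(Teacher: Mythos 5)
Your proof is correct, and its core computation coincides with the paper's: both use Theorem \ref{sig_zero} plus the substitutions $p_1^{t_*}+p_0^{t_*}=\theta^{t_*}_{ossp}$ and $p_0^{t_*}+p_1^{t_*}+q_0^{t_*}+q_1^{t_*}=1$ to reduce the objective of LP \eqref{basic_game} to $f(p_0^{t_*},q_0^{t_*})=p_0^{t_*}(U_{d,c}^{t_*}-W)+q_0^{t_*}(U_{d,u}^{t_*}-W)+W$ with $W=P^{t_*}E^{t_*}_{\tau}C_{t_*}\le 0$, and both invoke the hypothesis $U_{d,u}^{t_*}>W$ precisely to make the $q_0^{t_*}$-coefficient positive. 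Where you differ is the finishing move. The paper stays inside the feasible polygon of Theorem \ref{prob_0} and argues geometrically: under the hypothesis the level sets of $f$ have slope below $-1$, so the optimum must lie on the line $p_0^{t_*}+q_0^{t_*}=1$, i.e.\ $p_1^{t_*}=q_1^{t_*}=0$; combined with Theorem \ref{sig_zero} this switches signaling off for every type, so LP \eqref{basic_game} literally collapses to LP \eqref{SSG_SSE} and the two optimal values coincide. You instead sandwich the value: the lower bound $\mathbb{E}_d^{ossp}\ge\mathbb{E}_d^{sse}$ comes from Theorem \ref{SAG_no_worse}, and the upper bound comes from relaxing the feasible set to the box $[0,\theta^{t_*}_{ossp}]\times[0,1-\theta^{t_*}_{ossp}]$, using monotonicity of $f$ there, and identifying the corner value with an online-SSE payoff via Theorem \ref{SSE_peSSE_equivalence}. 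The trade-off is that the paper's route yields the stronger structural conclusion (the OSSP genuinely sets $p_1=q_1=0$, i.e.\ the warning mechanism is off), which its subsequent discussion relies on, whereas your route proves only the utility equality --- but it does so more cleanly: it needs neither the case split on $\beta$ nor the somewhat informal ``lift the boundary through $(0,1)$'' step, and it sidesteps the question of whether the maximizing corner is feasible because Theorem \ref{SAG_no_worse} already certifies attainability. You also correctly flag the only genuinely delicate point, namely that Theorem \ref{SSE_peSSE_equivalence} is what guarantees $t_*$ remains an attacker best response under the common coverage so that the corner value can be compared to $\mathbb{E}_d^{sse}$.
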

\begin{proof}
We prove this by applying the same simplification and the split strategy (i.e., analyze two distinct situations based on the value of $\beta$) as applied in the proof for Theorem \ref{prob_0}. Note that the slope of the objective function is $-(U^{t_*}_{d,c} - P^{t_*} \cdot E^{t_*}_{\tau} \cdot C_{t_*})/(U^{t_*}_{d,u} - P^{t_*} \cdot E^{t_*}_{\tau} \cdot C_{t_*})$. Since $C_{t_*} < 0$, the numerator is less than $0$. If $U^{t_*}_{d,u} >  P^{t_*} \cdot E^{t_*}_{\tau} \cdot C_{t_*}$, then the denominator is greater than 0. Thus, the slope is less than $0$. In particular, the slope is less than $-1$ (which is the slope of boundary $p_0^{t_*} + q_0^{t_*} = x$) because of $U^{t_*}_{d,c}\ge0>U^{t_*}_{d,u}$. We now analyze properties in this situation geometrically.

As demonstrated in Figures \ref{feas_b1} and \ref{feas_b2}, the boundary $p_0^{t_*} + q_0^{t_*} = x (\in [0,1])$ should pass through the $(0,1)$ point. This is because, if this failed to occur, then the value of the objective function can be further improved by lifting the boundary. The optimal solution for both cases is at the intersection point of the two boundaries of the feasible region. Thus,  it follows that $p_0^{t_*} + q_0^{t_*} = 1$ for the OSSP, which implies  $p_1^{t_*} = q_1^{t_*} = 0$. In other words, the signaling procedure is turned off for the best attacking type $t_*$ in the OSSP. Combining with what Theorem \ref{sig_zero} indicates, when $U^{t_*}_{d,u} >  P^{t_*} \cdot E^{t_*}_{\tau} \cdot C_{t_*}$, the signaling procedure is off for all types. In LP \eqref{basic_game}, by substituting variables $p^{t'}_1$ and $q^{t'}_1$ (for all $t'$) with $0$, the SAG instance becomes an online SSG (as shown in LP \eqref{SSG_SSE}). Thus, the two LPs share the same solution, and the auditor will receive the same expected utility in both auditing mechanism.\qedhere

\end{proof}

\begin{figure}[h]%
\vspace{-4mm}
\centering
\label{feas_b}
\subfigure[$\beta \le 0$]{%
\label{feas_b1}%
\includegraphics[width=4.6cm]{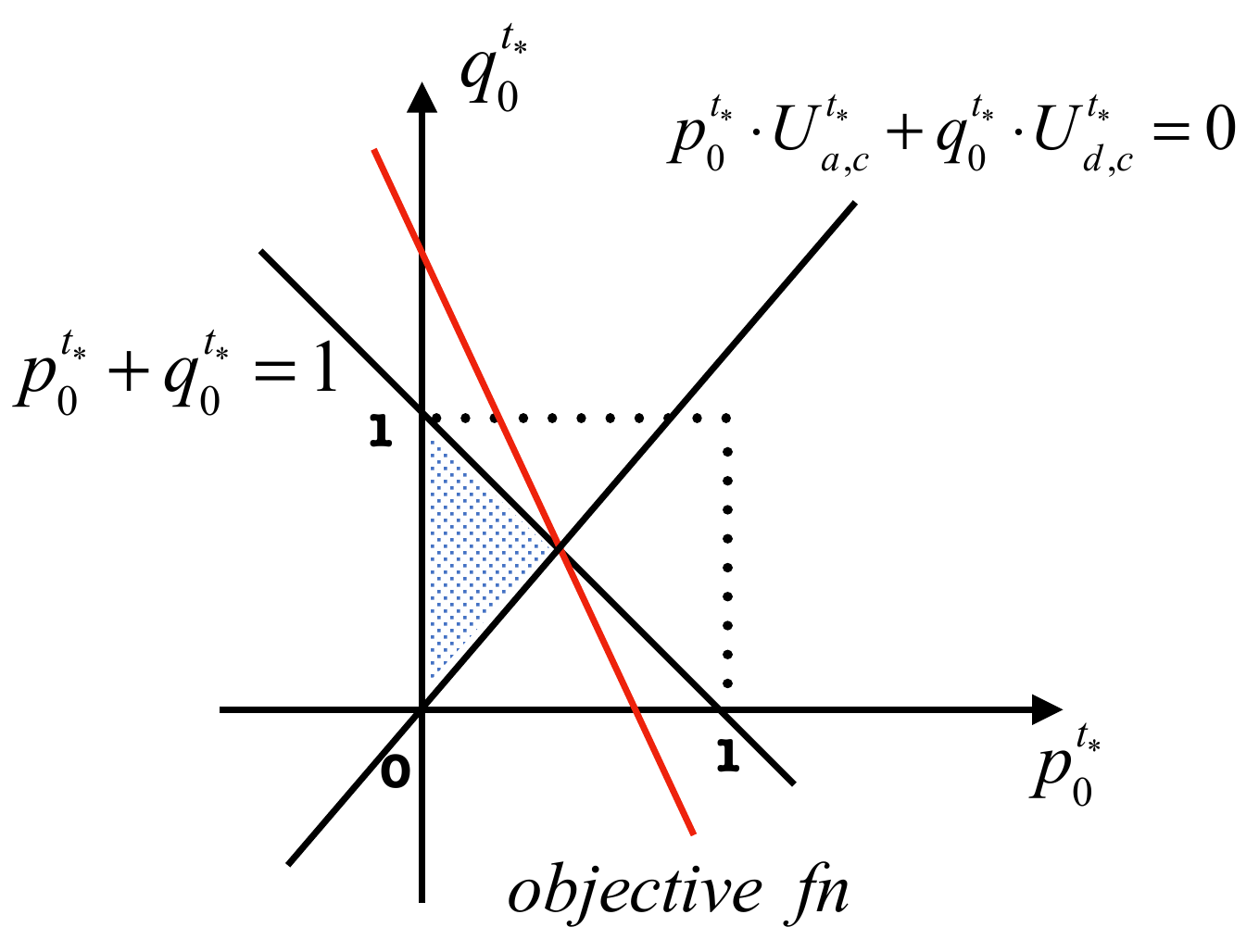}}%
\subfigure[$\beta > 0$]{%
\label{feas_b2}%
\includegraphics[width=4.4cm]{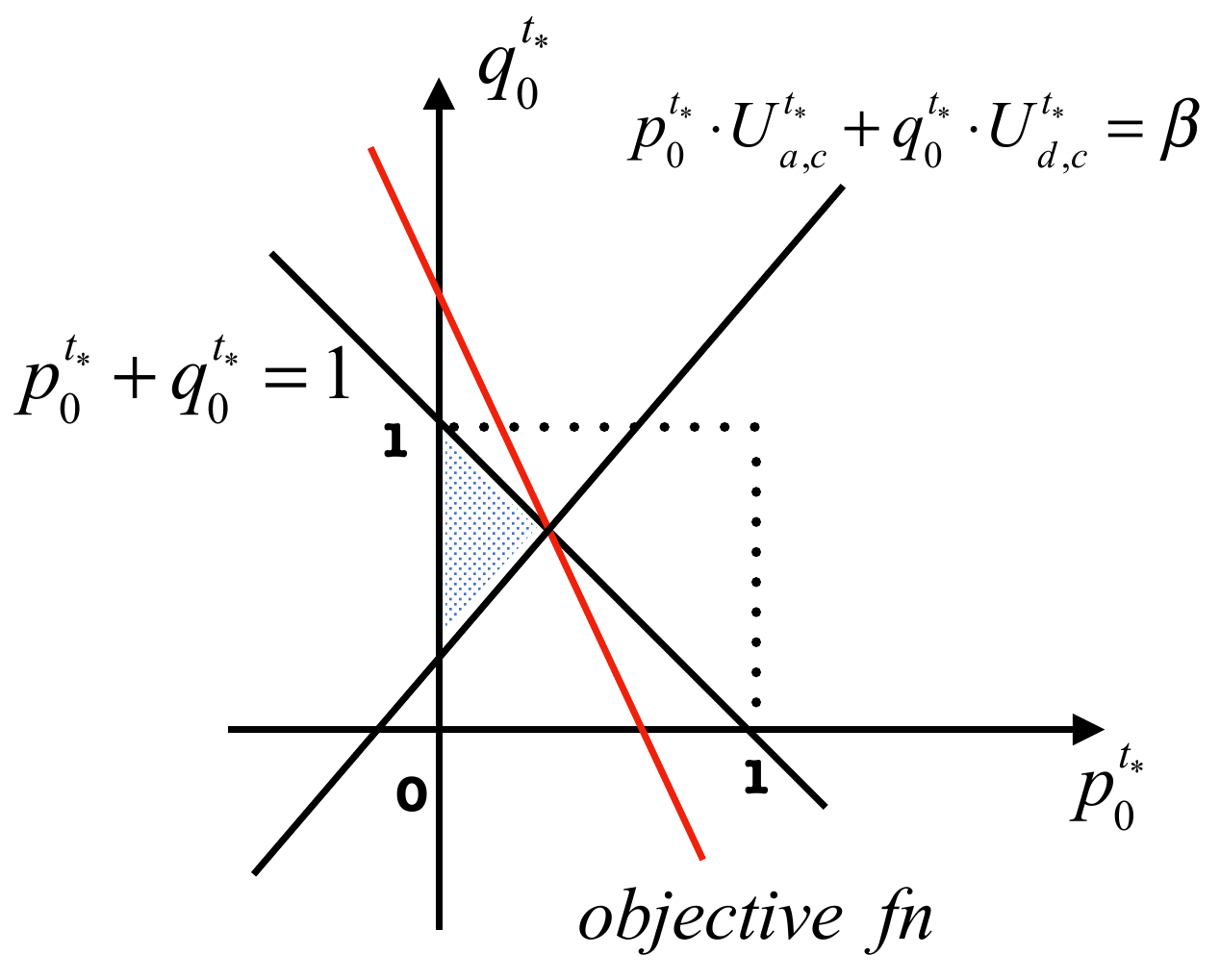}}%
\caption{Feasible regions (blue shaded triangle areas) and an objective function gaining the largest value for $\beta \le 0$ and $\beta>0$.  }
\end{figure}

This result indicates that if the incurred loss due to a warning is too large, then an SAG will degrade into an online SSG, where the signaling procedure is turned off. It suggests that in the application domain, to ensure that the signaling is deployed in a useful manner, organizations need to 1) refine the alert system so that false positive alerts can be classified as normal events, and 2) decrease  the number of the events in which normal users are scared away.

\section{Model Evaluation}
In this section, we evaluate the performance of the SAG on the real EMR access logs from VUMC, which deployed an unoptimized warning strategy. To illustrate the value of signaling, we compare with multiple game theoretic alternative methods in terms of the expected utility of the auditor. Specifically, we investigate the robustness of the advantage of SAGs under a range of different conditions. Now we first describe the real dataset which is used for evaluation. 

\begin{table}
	\fontsize{7.5}{8}\selectfont
	\centering
	\caption{A summary of the daily statistics per alert types.}\label{type}
	\begin{tabular}{llcccccc}
		\toprule
		\textbf{ID}  & \textbf{Alert Type Description}  & \textbf{Mean}  & \textbf{Std}  \\[0.1em]\hline
		\\[-0.2em]
		1 & Same Last Name & $196.57$ & $17.30$   \\[0.2em]
		2 & Department Co-worker &  $29.02$ & $5.56 $  \\[0.2em]
		3 & Neighbor ($\le$ 0.5 miles) &  $140.46$ & $23.23$   \\[0.2em]
		4 & Same Address & 10.84 &  $3.73$\\[0.2em]
		5 & Last Name; Neighbor ($\le$ 0.5 miles) & $25.43$  & $4.51$   \\[0.2em]
		6 & Last Name; Same Address & $15.14$  & $4.10$   \\[0.2em]
		7 & Last Name; Same Address; Neighbor ($\le$ 0.5 miles) & $43.27$  & $6.45$   \\[0.1em]
		\hline
	\end{tabular}
	\vspace{-2mm}
\end{table}

\begin{table}[h]
	\fontsize{7.1}{8}\selectfont
	\centering
	\caption{The payoff structures for the pre-defined alert types.}\label{payoff}
	\begin{tabular}{rrrrrrrr}
		\toprule
		\\[-1.0em]
		\textbf{Payoff}  &  \textbf{Type} $1$  & \textbf{Type} $ 2$  & \textbf{Type} $3$ & \textbf{Type} $4$ & \textbf{Type} $5$ & \textbf{Type} $6$ & \textbf{Type} $7$ \\[0.2em]\hline
		\\[-0.4em]
		 $U_{d,c}$  & $100$ & $150$ & $150$ & $300$ &  $400$ & $600$ & $700$ \\[0.2em]
		 $U_{d,u}$  & $-400$ & $-500$ & $-600$ & $-800$ & $-1000$ & $-1500$ & $-2000$ \\[0.2em]
		 $U_{a,c}$  & $-2000$ & $-2250$ & $-2500$ & $-2500$ & $-3000$ & $-5000$ & $-6000$ \\[0.2em]
		 $U_{a,u}$  & $400$ & $400$ & $450$ & $600$ & $650$ & $700$ & $800$ \\[0.1em]
		\hline
	\end{tabular}
	\vspace{-2mm}
\end{table}

\subsection{Dataset}
The dataset consists of EMR access logs for 56 continuous normal working days in 2017. We excluded all holidays (include weekends) because they exhibit a different access pattern from working days. The total number of unique accesses $\langle$Date, Employee, Patient$\rangle$ is on the order of $10.75M$. The mean and standard deviation of daily unique accesses are approximately $192K$ and $8.97K$, respectively. 
We focus on the following alerts types, which are real rules deployed for potential auditing: 1) employee and patient share the same last name, 2) employee and patient work in the same VUMC department, 3) employee and patient share the same residential address, and 4) employee and patient live within $0.5$ miles of one another. When an access triggers multiple distinct types of alerts, their combination is regarded as a new type. 
Table \ref{type} lists the set of predefined alert types, along with the mean and standard deviation of their occurrence on a daily basis. We provide the payoff structure for both the attacker and the auditor in Table \ref{payoff}, the magnitude of which is based on the input of a domain expert.

\subsection{Experimental Setup}

\begin{figure*}[h]%
\label{fig4}
\centering
\subfigure[ Day 1]{%
\label{fig_mul_1}%
\includegraphics[width=4.0cm]{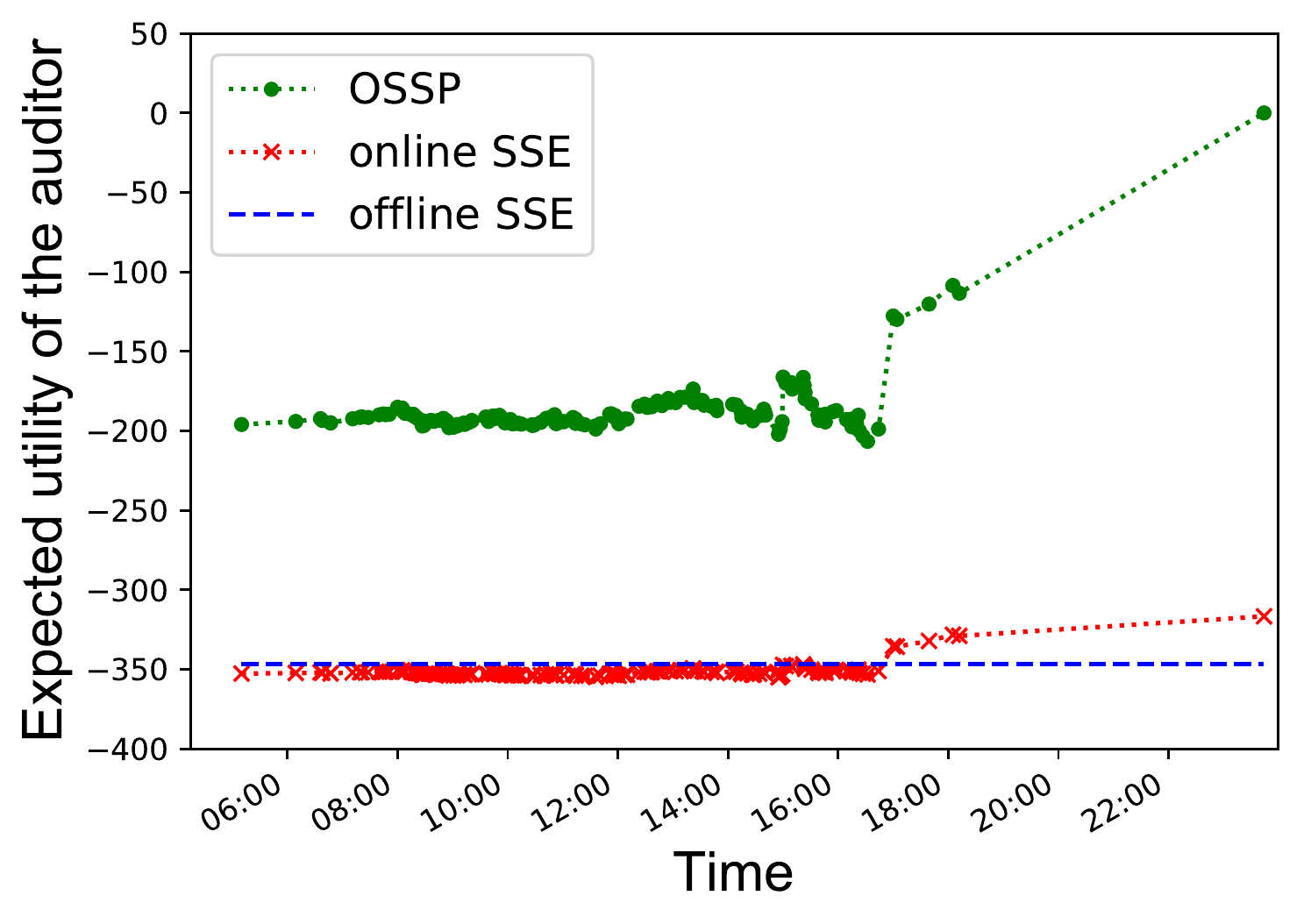}}%
\subfigure[ Day 2]{%
\label{fig_mul_2}%
\includegraphics[width=4.0cm]{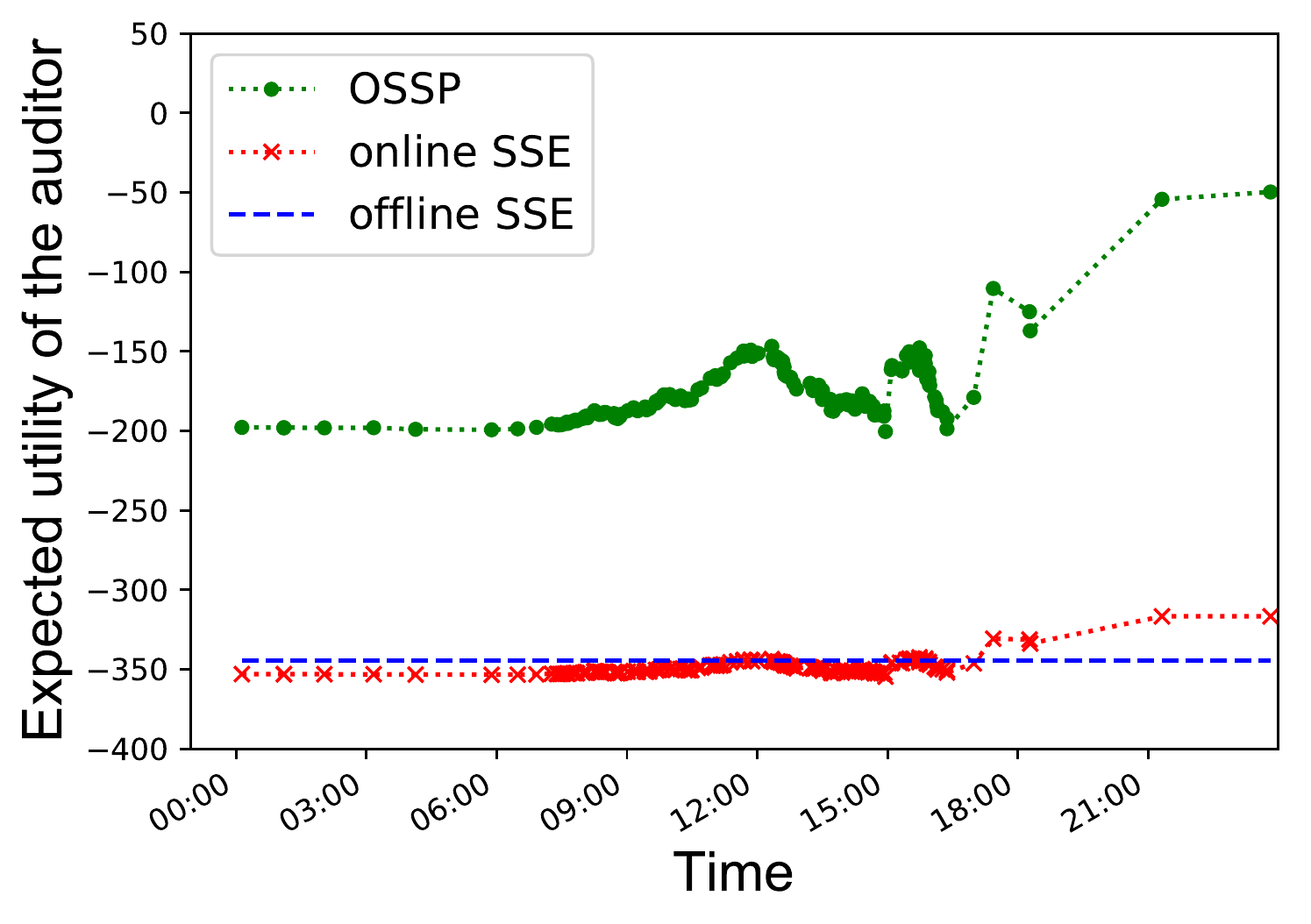}}%
\subfigure[ Day 3]{%
\label{fig_mul_3}%
\includegraphics[width=4.0cm]{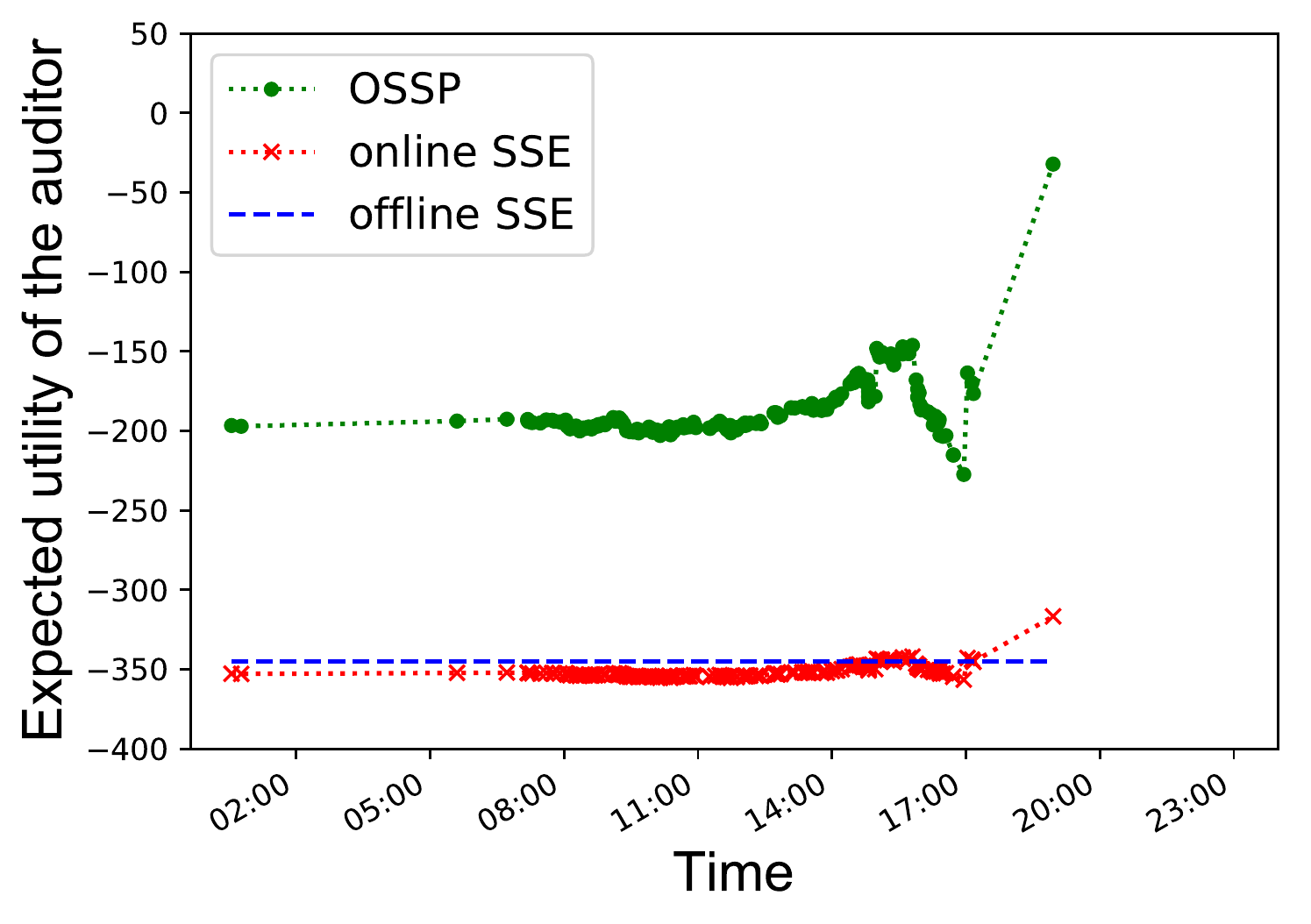}}%
\subfigure[ Day 4]{%
\label{fig_mul_4}%
\includegraphics[width=4.0cm]{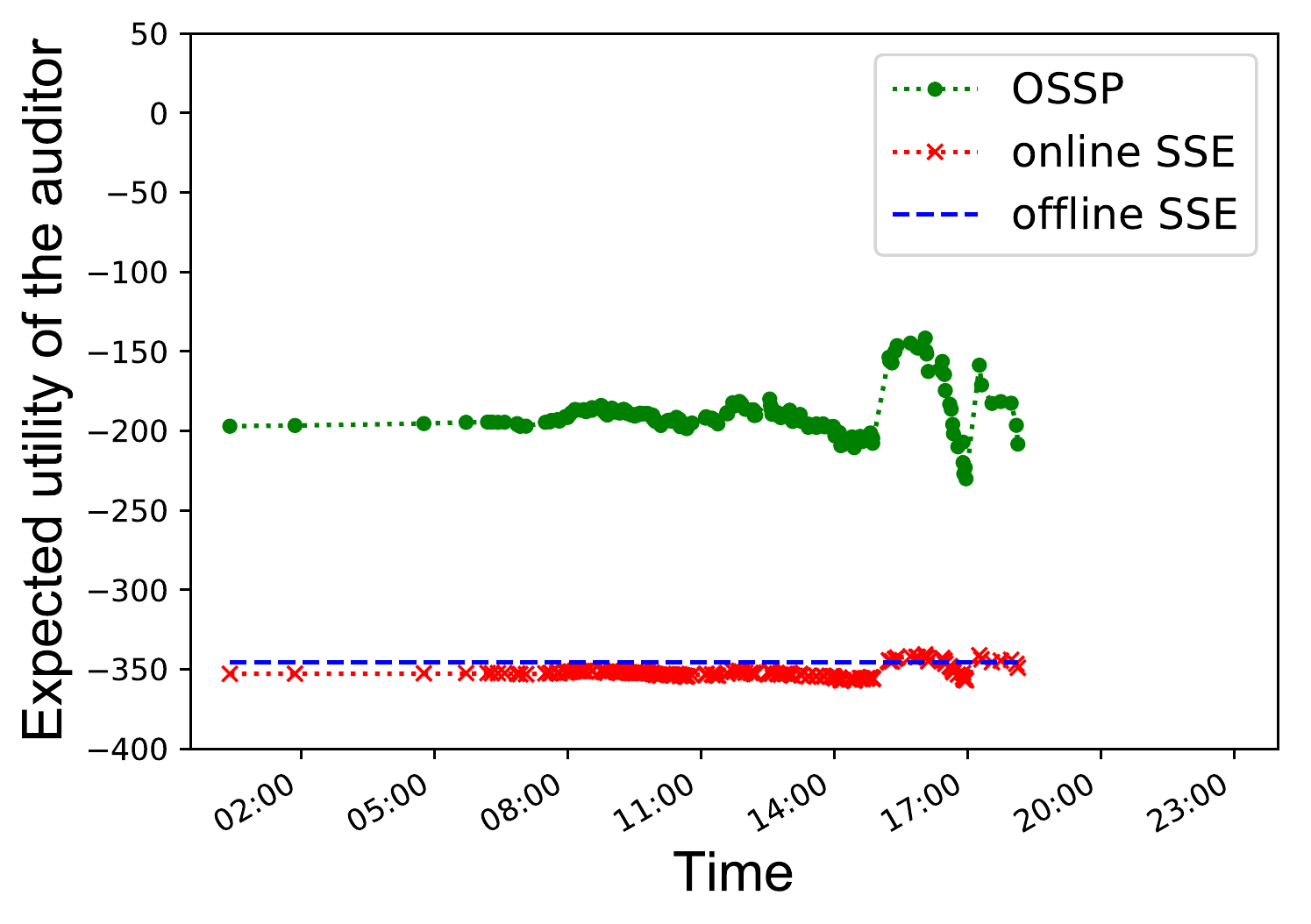}}%

\subfigure[ Day 5]{%
\label{fig_mul_5}%
\includegraphics[width=4.0cm]{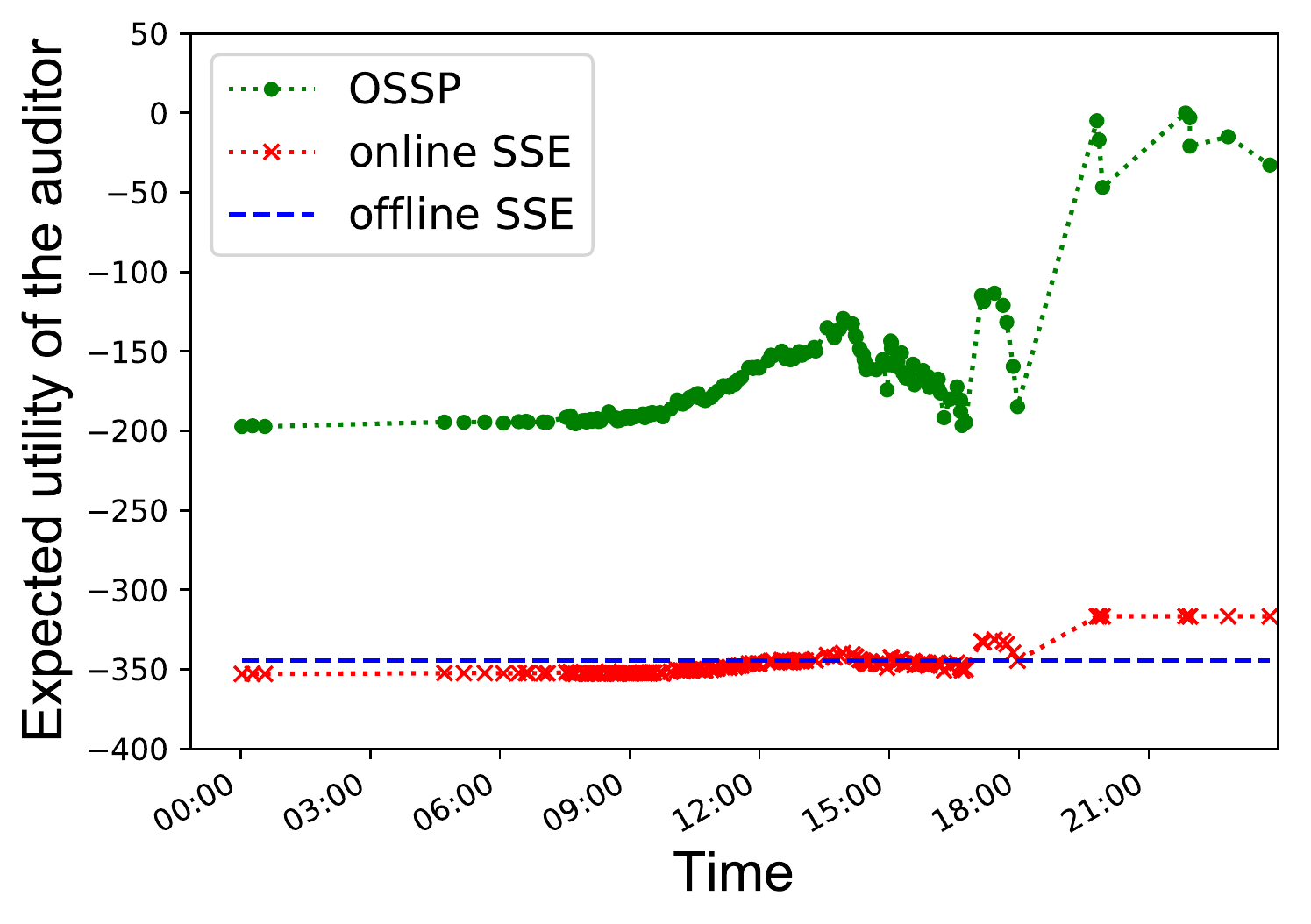}}%
\subfigure[ Day 6]{%
\label{fig_mul_6}%
\includegraphics[width=4.0cm]{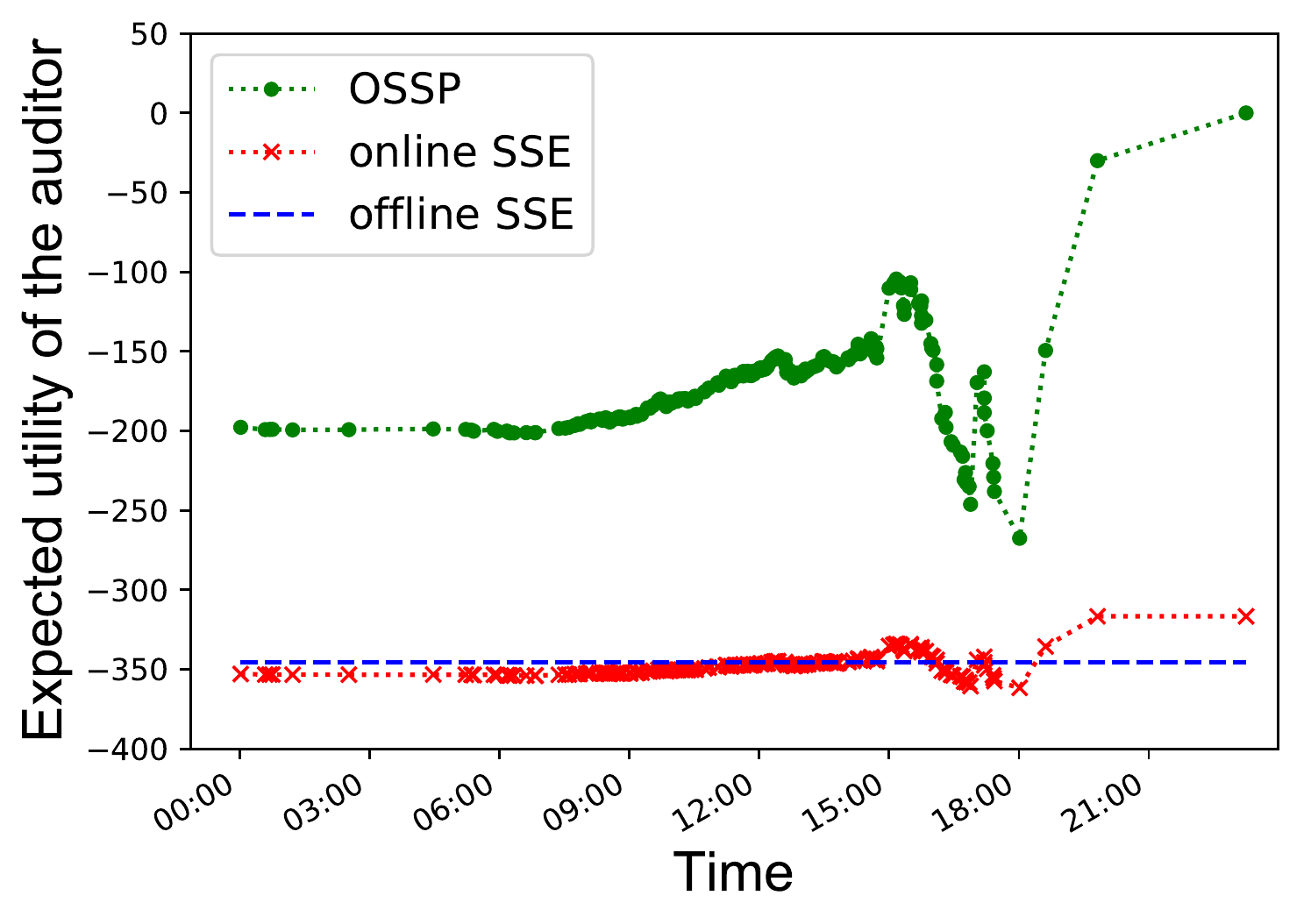}}%
\subfigure[ Day 7]{%
\label{fig_mul_7}%
\includegraphics[width=4.0cm]{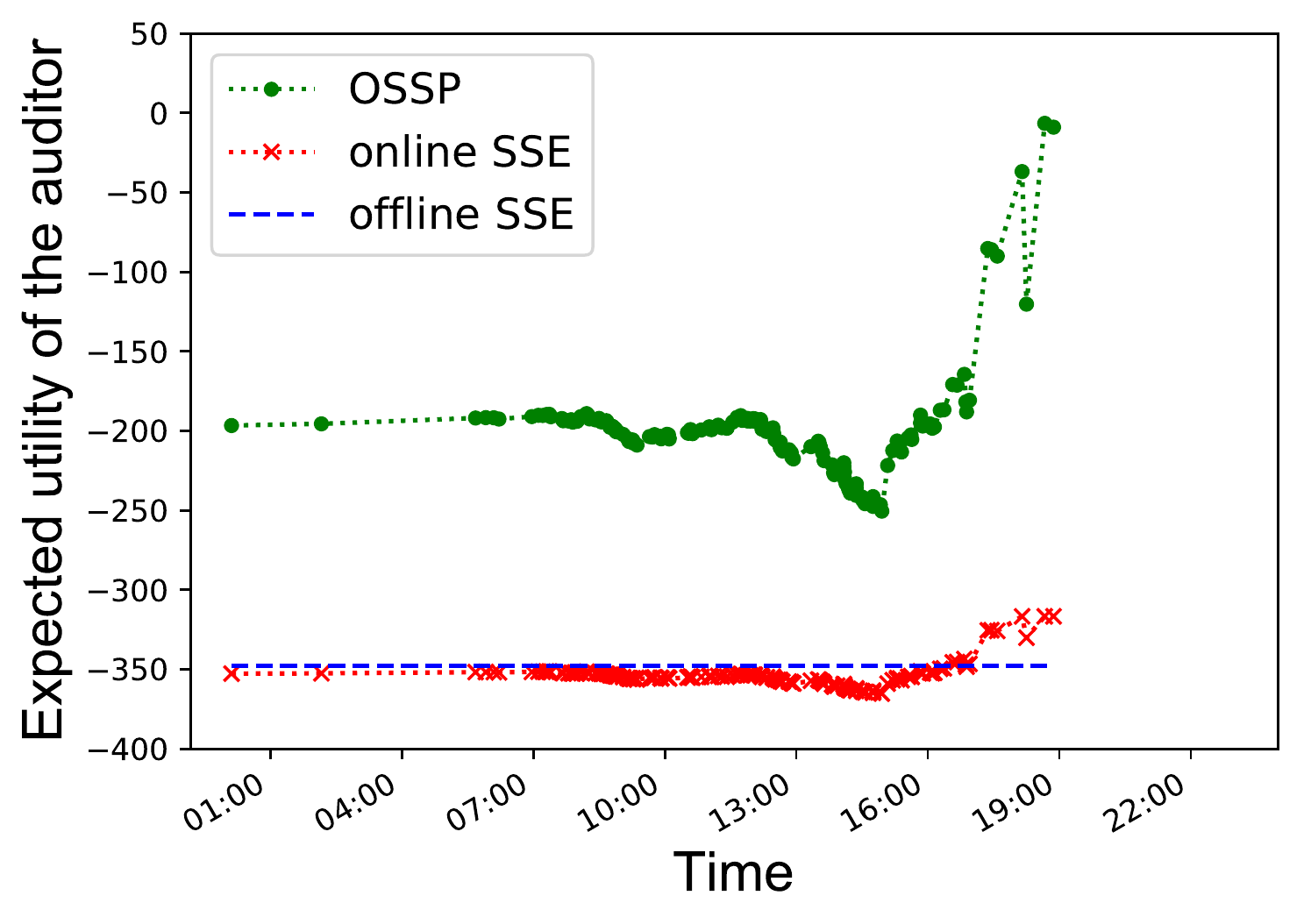}}%
\subfigure[ Day 8]{%
\label{fig_mul_8}%
\includegraphics[width=4.0cm]{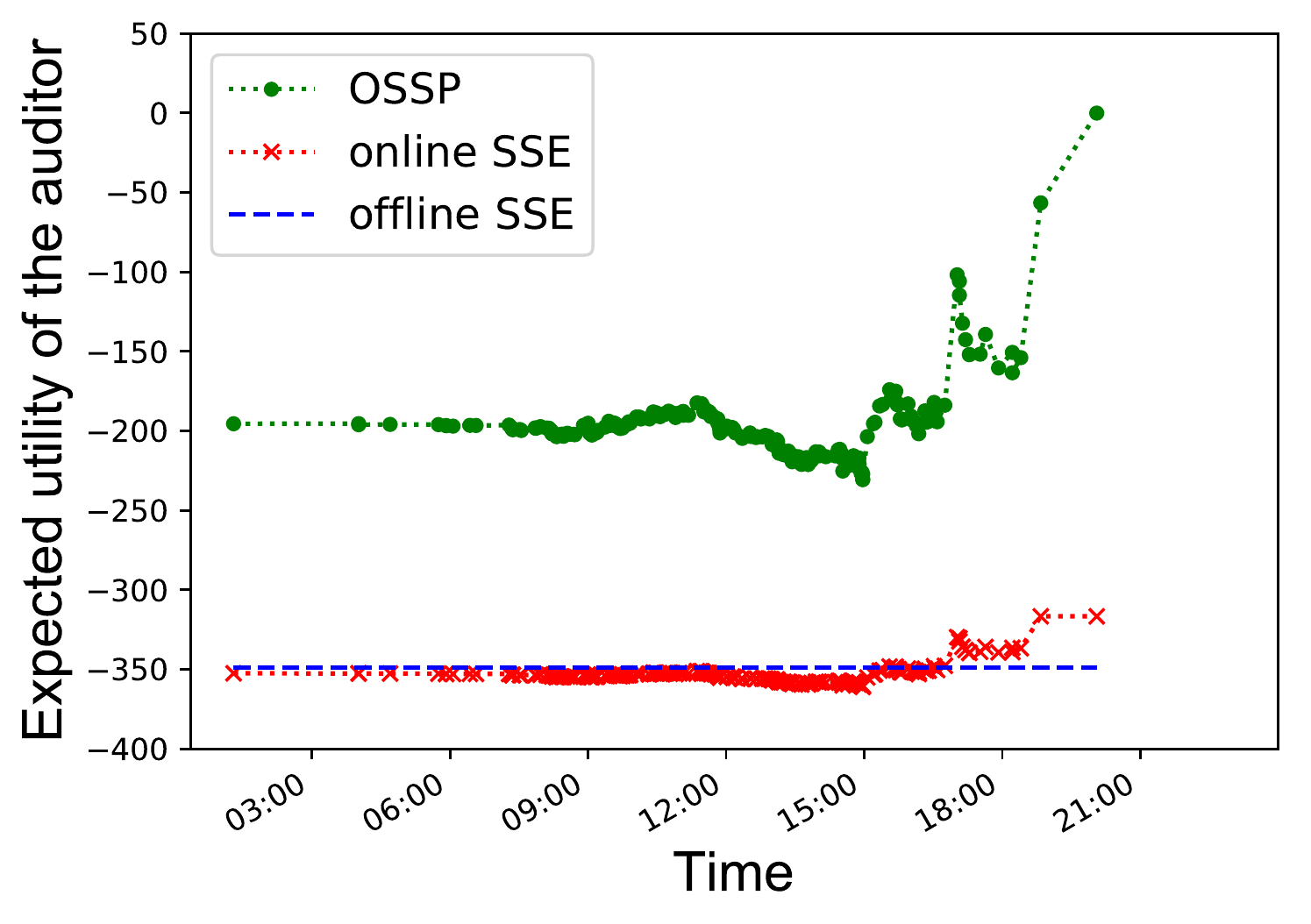}}%

\subfigure[ Day 9]{%
\label{fig_mul_9}%
\includegraphics[width=4.0cm]{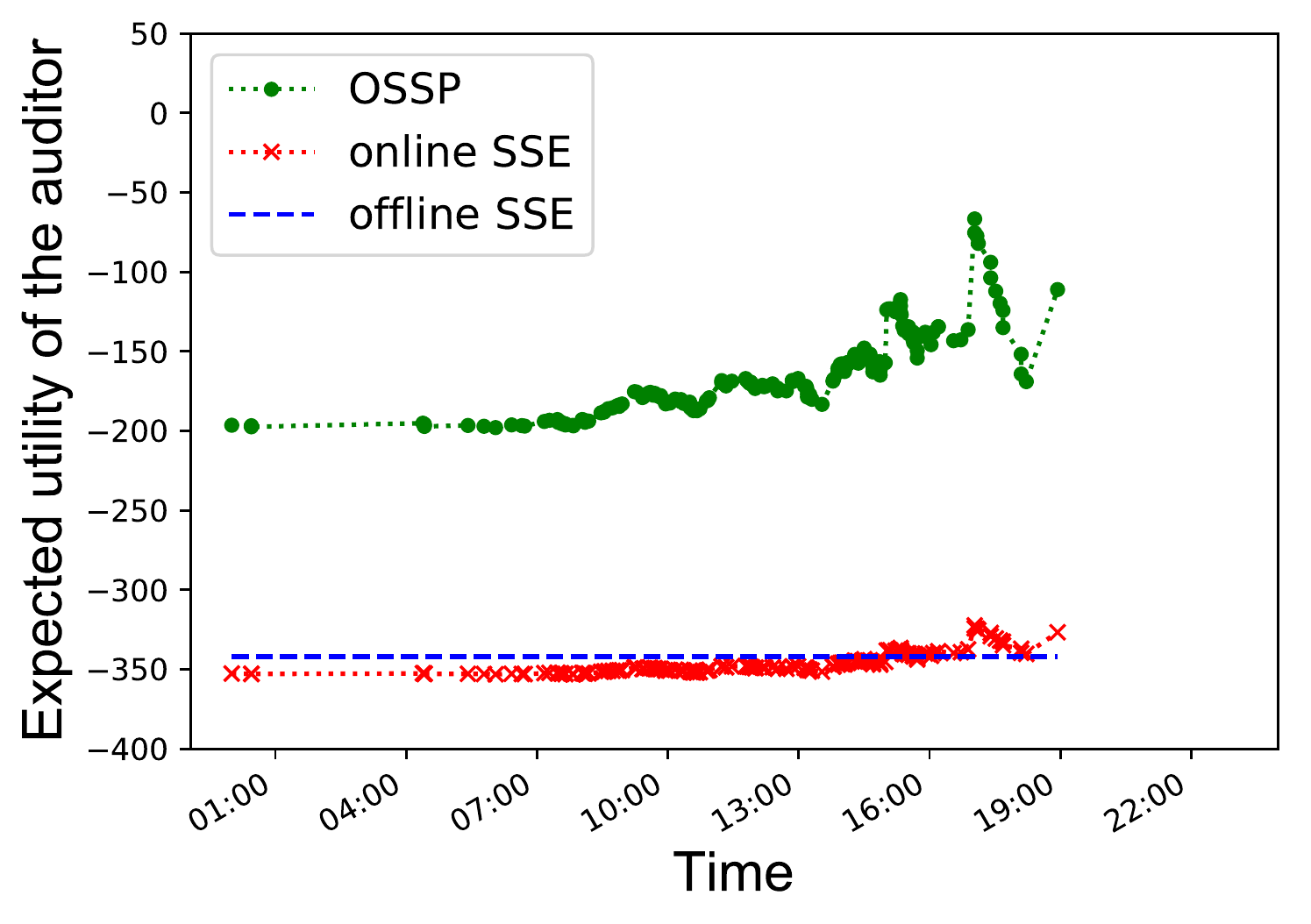}}%
\subfigure[ Day 10]{%
\label{fig_mul_10}%
\includegraphics[width=4.0cm]{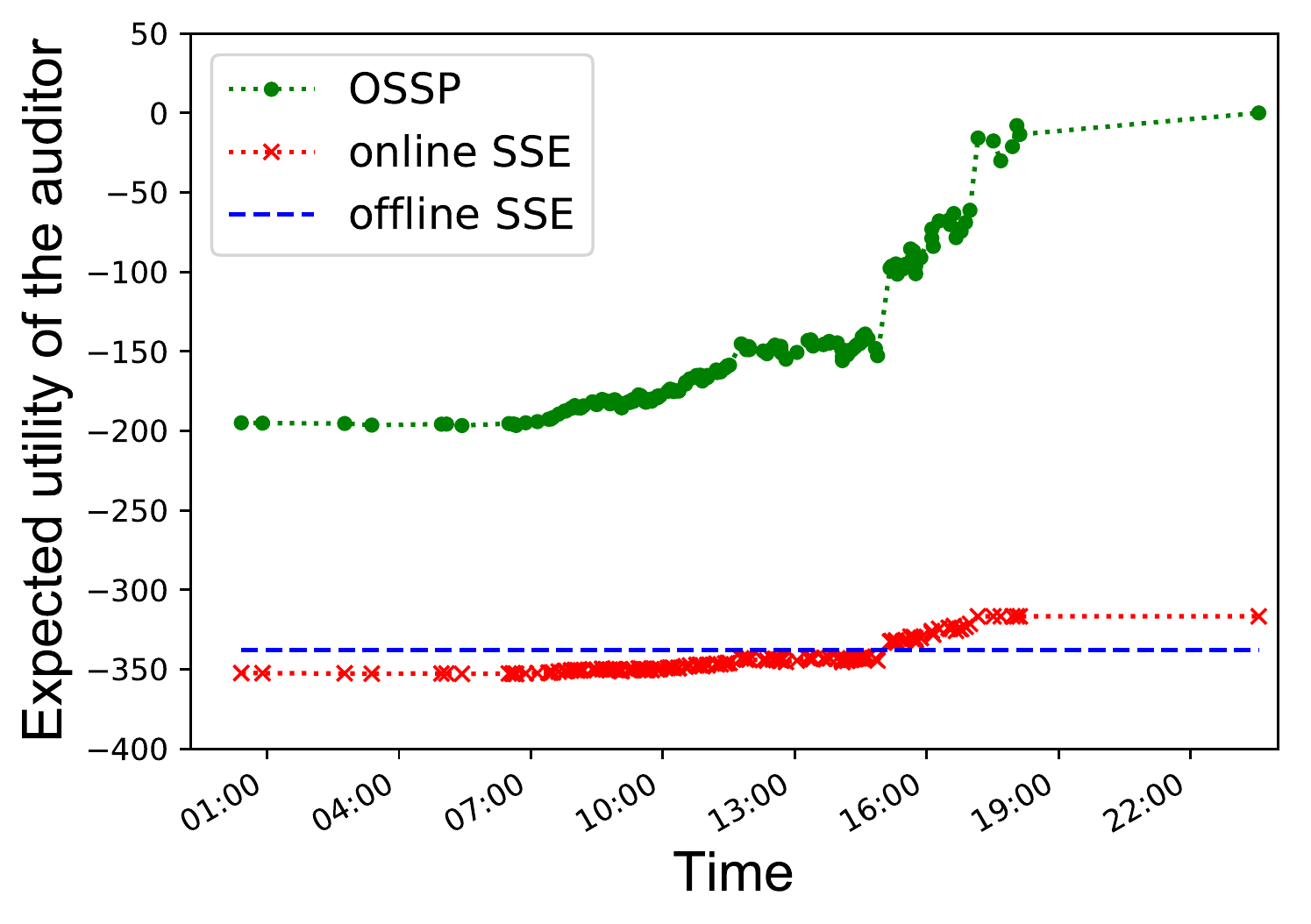}}%
\subfigure[ Day 11]{%
\label{fig_mul_11}%
\includegraphics[width=4.0cm]{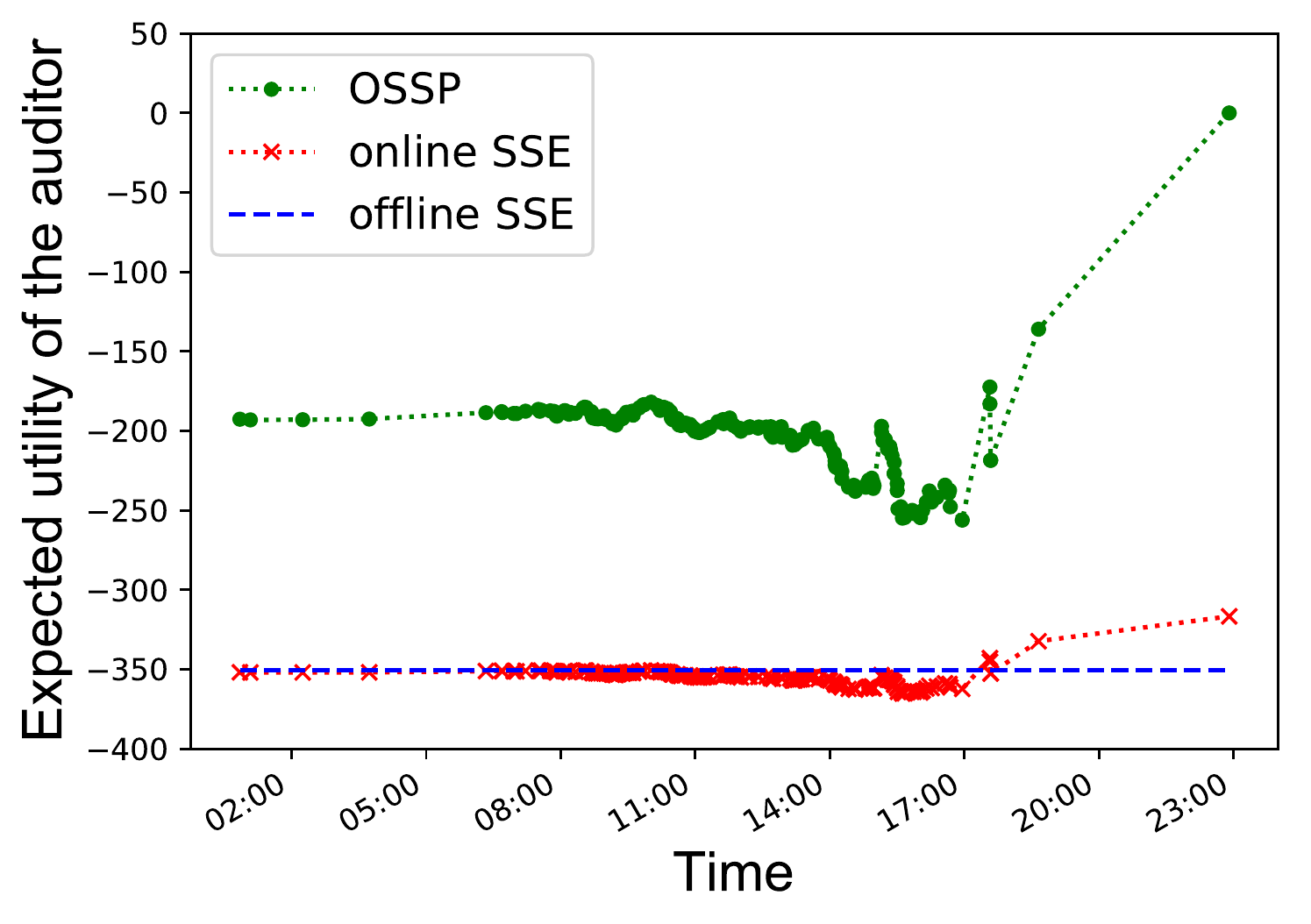}}%
\subfigure[ Day 12]{%
\label{fig_mul_12}%
\includegraphics[width=4.0cm]{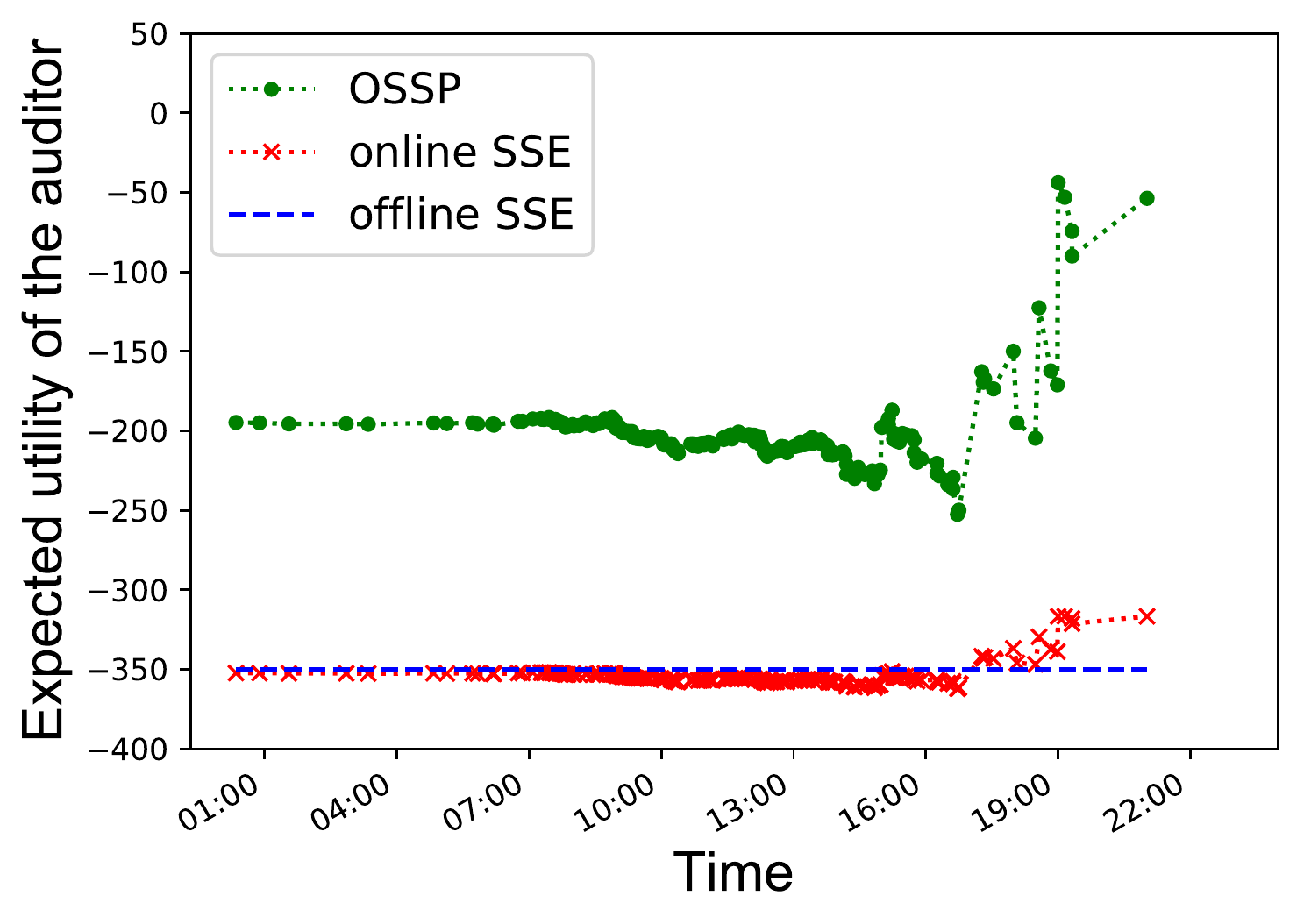}}%

\subfigure[ Day 13]{%
\label{fig_mul_13}%
\includegraphics[width=4.0cm]{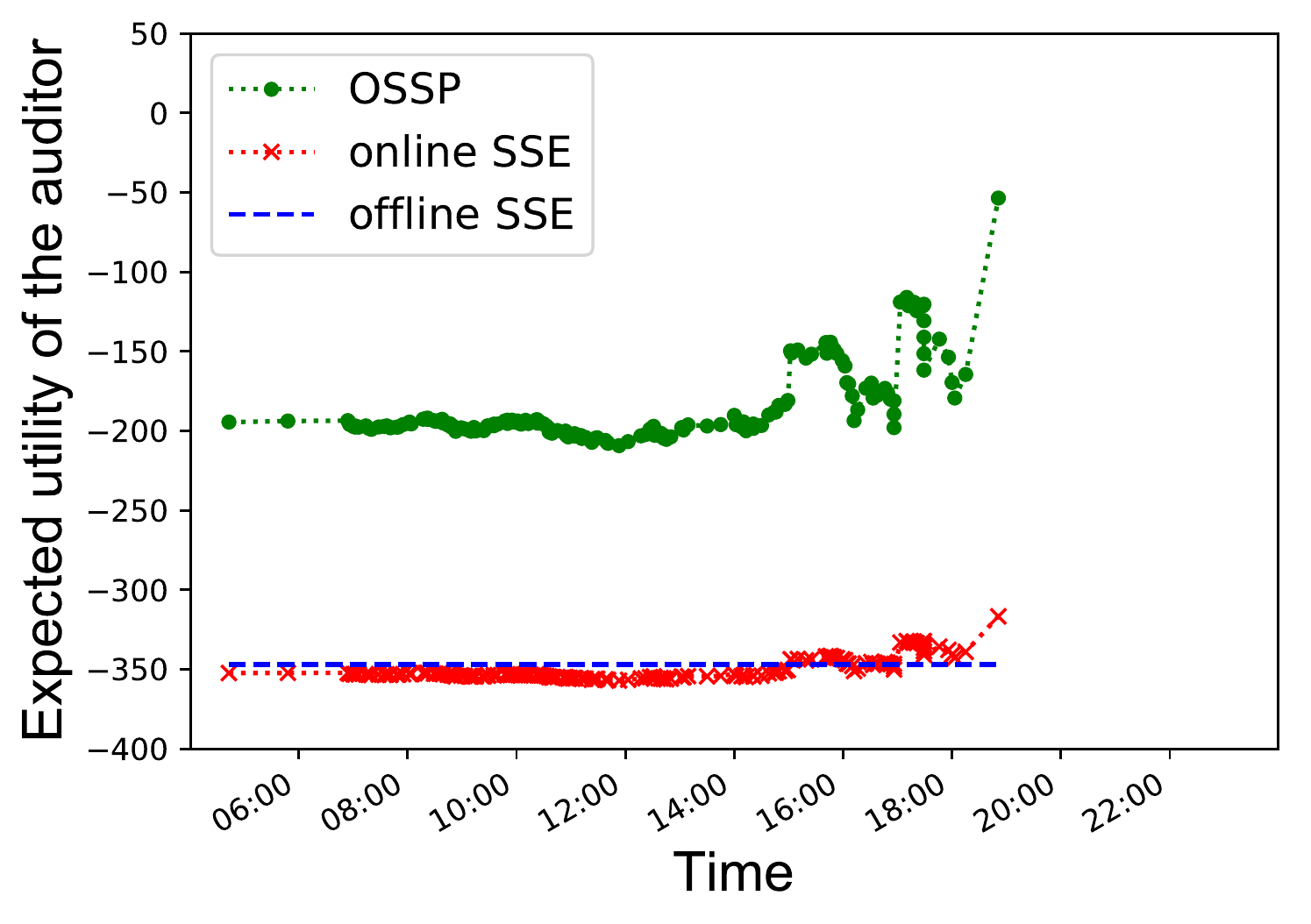}}%
\subfigure[ Day 14]{%
\label{fig_mul_14}%
\includegraphics[width=4.0cm]{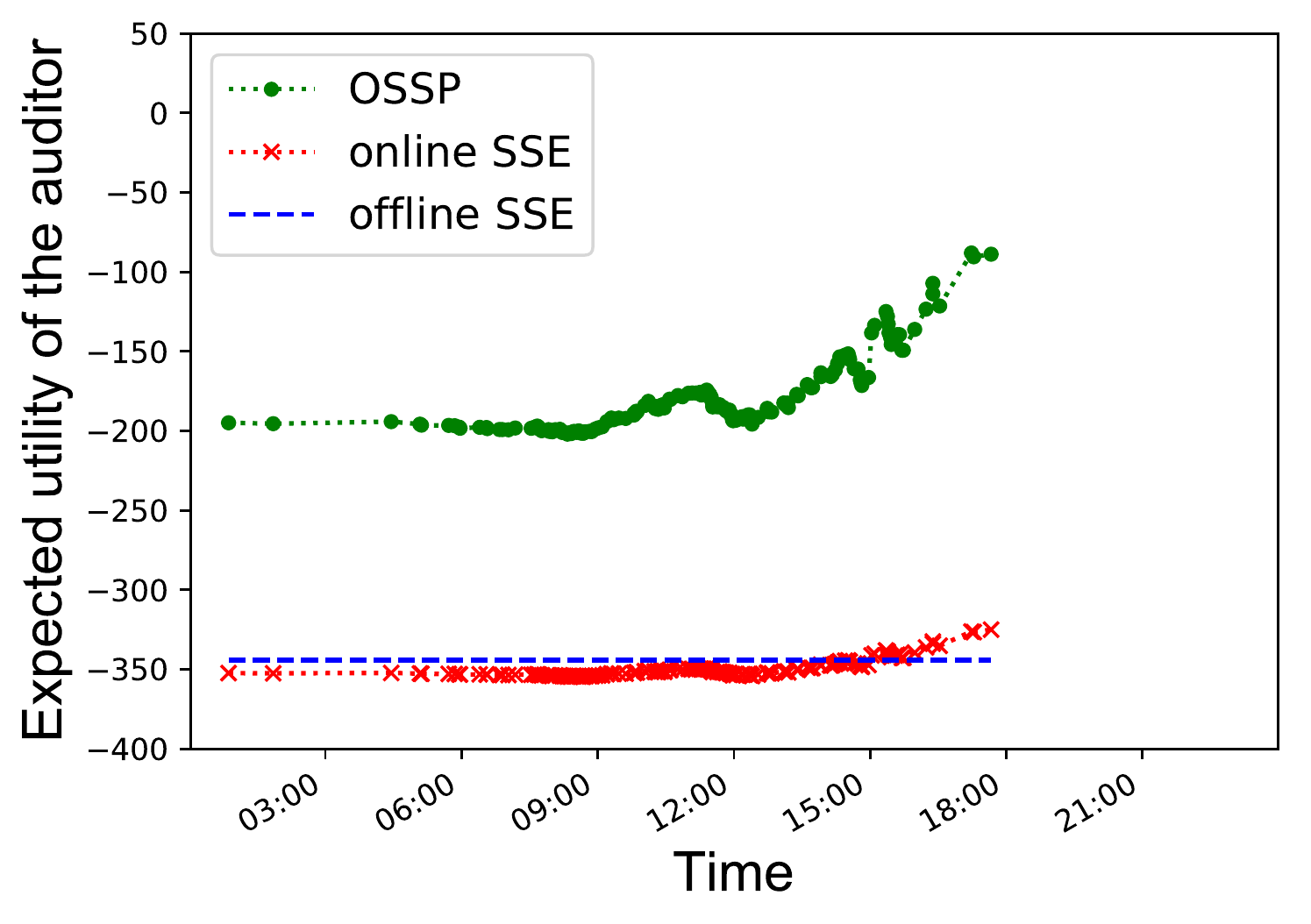}}%
\subfigure[ Day 15]{%
\label{fig_mul_15}%
\includegraphics[width=4.0cm]{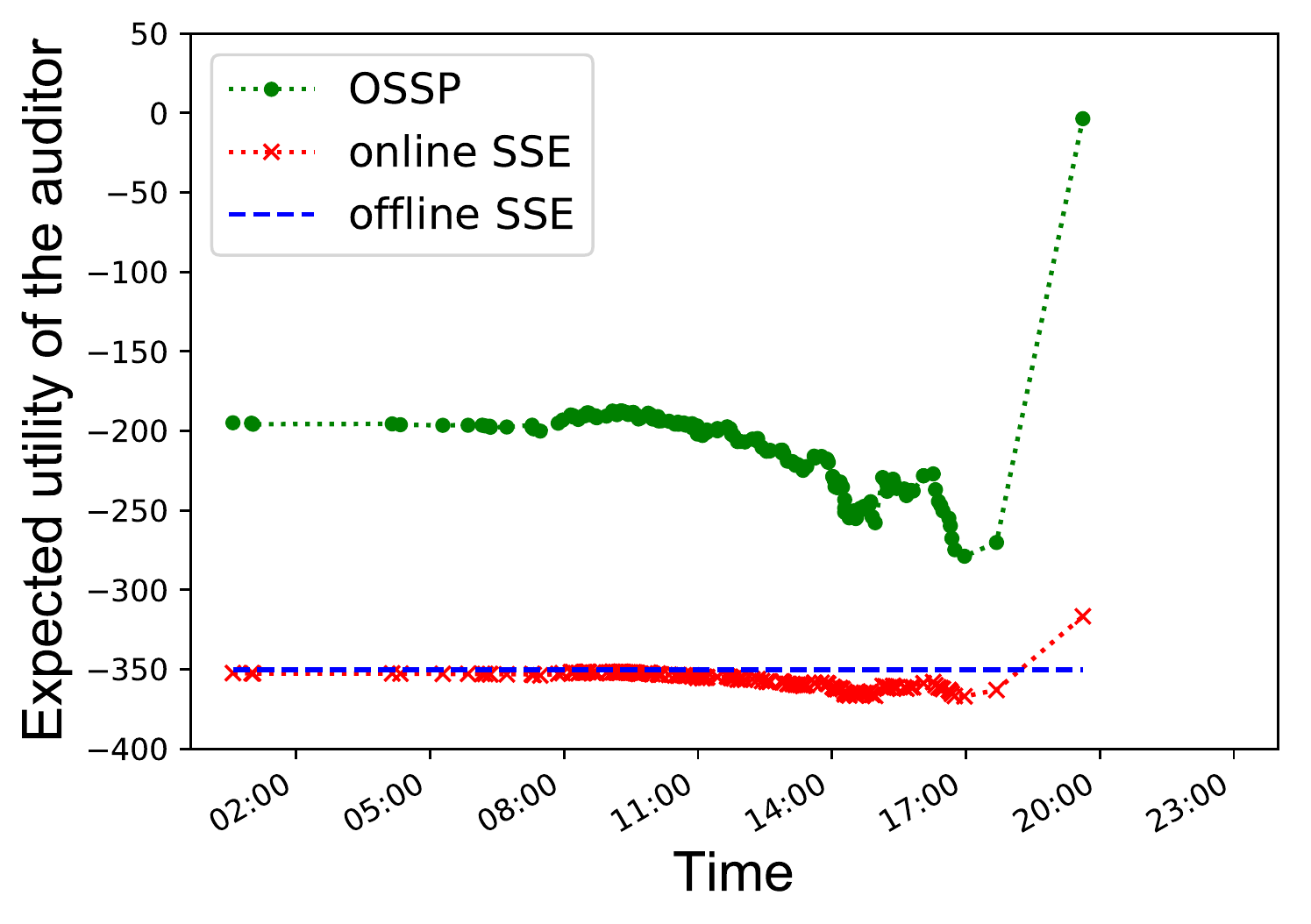}}%

\caption{The auditor's expected utility in the OSSP and alternative equilibria for the 7 alert types with a total budget of $B = 50$. We applied $\alpha = 1\%$ and $C_t=-1$ for the OSSP.}
\vspace{-4mm}
\end{figure*}

The audit cycle is defined as one day from \emph{0:00:00} to \emph{23:59:59}. From the dataset, we construct $15$ groups, each of which contains the alert logs of $41$ continuous normal working days as the historical data (for estimating the distributions of future alerts in all types), and the alert logs of the $1$ subsequent day as the day for testing purpose. We set up a real time environment for evaluating the performance in terms of the auditor's expected utility. We set the audit cost per alert to $V^t = 1, \forall t \in \{1,...,|T|\}$. From the alert logs of three months, we obtain the frequency at which users quit when they receive the warning messages in our dataset. According to this observation, in our experiments we set the probability of quitting as $P^{t} = 0.186$ in the SAG model for all types.

We compare the real time auditor's expected utility for each triggered alert between the OSSP (the optimal objective value of LP \eqref{basic_game}) and both the \emph{offline} and \emph{online SSE} (the optimal objective value of LP \eqref{SSG_SSE}). The offline SSE corresponds to the traditional auditing method, which determines the auditing strategy for each alert at the end of the auditing cycle. By contrast, the online SSG determines the auditing strategy for each alert in real time, which is equivalent to an SAG without signaling. 

One significant challenge in comparing the OSSP with the online SSE is that the real time budget consumption in the SAG is determined by the sampling result of warning/no warning and, thus is not deterministic. This leads to a situation where, for the time series of alerts in each audit cycle, if there is no intervention, then the online SSG and the SAG will move independently with respect to the game status. As such, their performance cannot be directly compared.
To set up a well-controlled environment for comparison, for each incoming alert we focus on the online SSG with its game status be the same as the current SAG instance. Recall that for the SAG, the auditor needs to reserve a portion of the total auditing budget for inspecting the repeated data requests at the end of each audit cycle. Due to the fact that it is unnecessary for the online SSG, we set the available budget at each incoming alert in the online SSG  to be equal to  the sum of 1) the available budget of the SAG instance at the current time point, and 2) the reserved budget of the SAG for the additional inspection of the repeated data requests. By doing so, it makes our comparison reasonable and fair.

To investigate the robustness of the results, we evaluate the performance by varying three factors. First, we vary the loss value for the auditor with respect to each quit of a normal user when receiving a warning message. We set $C_{t} = \{-1, -5, -10\}$.\footnote{To the best of our knowledge, there is no perfect measure for this loss in the EMR application domain.}
Second, to deter the attacker who  quits until they receive no warning in the safe period for an SAG (where $p_0^{t*}=0$ as shown in Theorem \ref{prob_0}), we assess a series of constant budgets, which we set to $\alpha = \{1\%, 5\%\}$ of the total available budget $B$. We do not consider this situation in the baseline strategies because such loss does not apply. Third, we vary the total auditing budget. Specifically, we consider $B = \{30, 50, 70\}$. By setting $B = 50$, the available budgets for the SAG at the very beginning time point of an audit cycle are $49.5$ for $\alpha = 1\%$ and $47.5$ for $\alpha = 5\%$, respectively.

\subsection{Results} 
We considered all $7$ alert types described in Table \ref{type}.  Due to space limitations, we only show the sequential results of 15 sequential testing days along the timeline in Figures \ref{fig_mul_1}-\ref{fig_mul_15} by applying $B = 50, C_{t} = -1$ for all types and $\alpha = 1\%$.    
 
 \begin{table*}[ht]
\center
\fontsize{7.5}{9}\selectfont
\caption{The advantages of OSSP over online SSE in terms of the mean (and the standard deviation) of the differences in the auditor's expected utility (15 testing days). }
\label{diff}
\begin{tabular}{@{}ccccccc@{}}
\toprule
\midrule
 \multirow{2}{*}{$B$}    & \multicolumn{2}{c}{$C_t=-1$} & \multicolumn{2}{c}{$C_t=-5$} & \multicolumn{2}{c}{$C_t=-10$} \\ \cmidrule(l){2-3} \cmidrule(l){4-5} \cmidrule(l){6-7}
                      & $\alpha =1\%$  & $\alpha =5\%$  & $\alpha =1\%$  & $\alpha =5\%$ &  $\alpha =1\%$  & $\alpha =5\%$ \\
\midrule
  $30$ &     $60.87 \pm 28.31~~15.99\%$    &   $47.01 \pm 32.17~~12.45\%$      &    $40.43 \pm 23.95~~10.59\%$    &   $29.89 \pm 28.77~~7.92\%$       &   $26.91 \pm 25.77~~7.06\%$     &    $10.94 \pm 24.93~~2.90\%$          \\[0.2em]
  $50$ &   $165.83 \pm 24.49~~47.26\%$      &    $147.51 \pm 27.74~~42.65\%$        &    $143.19 \pm 33.98~~40.87\%$    &   $117.52 \pm 34.56~~34.20\%$      &  $127.31 \pm 37.55~~36.23\%$      &    $106.21 \pm 38.85~~31.21\%$          \\[0.2em]
  $70$ &   $252.57 \pm 20.44~~77.31\%$      &   $235.14 \pm 23.57~~72.87\%$        &    $227.59 \pm 33.10~~69.31\%$    &  $204.33 \pm 36.77~~63.63\%$         &   $225.35 \pm 37.58~~68.73\%$     &    $198.69 \pm 40.93~~61.89\%$        \\
\bottomrule     
\end{tabular}
\vspace{-2mm}
\end{table*}

It is noteworthy that the type for each alert may not be aligned with the optimal attacking type in the OSSP strategy. Thus, to compare the approaches, we only apply the SAG on alerts whose type is equal to the best attacking type in the OSSP. For alerts whose types differ, we simply apply the online SSE strategy and use its optimal coverage probability to update the real time available budget. When applying SAGs, we first optimize the signaling scheme, then randomly sample whether to send a warning according to $\mathbf{P}(\xi^{\tau}_1)$. Next, we update (in real time) the available budget based on the signal.

Figures \ref{fig_mul_1}-\ref{fig_mul_15} illustrate the real time expected utility of the auditor. It can be seen that the majority of alerts were triggered between $8:00 AM$ and $5:00 PM$, which generally corresponds to the normal working hours of VUMC.  After this period, the rate of alerts slows down considerably. Note that  the trend for offline SSE is flat because, in this method, the auditor's expected utility is the same for each alert regardless of when it is triggered. 

There are several notable findings and implications.
First, in terms of the expected utility of the auditor, OSSP significantly outperforms the offline SSE and the online SSE. 
This suggests that the SAG increases auditing effectiveness. We believe that this advantage is due to the optimized signaling mechanism, which ensures the loss of the auditor is zero when sending warning messages.
Second, at the end of each testing day, the auditor's expected utility for each approach does not drop below the online SSE. We believe that this is an artifact of the knowledge rollback, which slows down the budget consumption in this period. In particular, at the end of multiple testing days, such as illustrated in Figures \ref{fig_mul_1}, \ref{fig_mul_6}, \ref{fig_mul_7}, \ref{fig_mul_8}, \ref{fig_mul_10}, \ref{fig_mul_11} and \ref{fig_mul_15}, the expected auditor loss approaches $0$. Third, the sequences of online SSE are close to the corresponding offline SSE sequences. This indicates that the auditing procedure does not benefit from determining only the coverage probability for each of the alert types in real time. In other words, the signaling mechanism in the SAG can assist the auditing tasks in various environments. Moreover, the advantage of OSSP over online SSE grows with the overall budget. 

We expanded the investigation to consider various conditions of the auditing tasks. We computed the mean (and standard deviation\footnote{Note that the distributions are not necessarily Gaussian. The standard deviations are largely dominated by the ending periods of testing days, where the expected utility of the auditor in the OSSP is usually close to $0$.}) differences between the OSSP and the corresponding online SSE for each triggered alert across $15$ testing days by varying the total auditing budget, the loss of the auditor on each quit of normal users, and the percentage of the budget for inspecting anomalous repeated requests. 
The results are shown in Table \ref{diff}, where we also indicate the percentage of the averaged improvement in each setting. Here, this value is defined as the absolute improvement on the expected utility of the auditor divided by the optimal auditor's expected utility in the online SSE. From the results, we have the following significant observations. First, it is notable that OSSP consistently outperforms the online SSE with respect to the auditor's expected utility in a variety of auditing settings. For example, in the setting that $C_t = -1$ for all $t$ and $\alpha=1\%$, as $B$ grows from $30$ to $70$, the auditor's expected utility improvement grows from $16\%$ to $77\%$. This is a trend that holds true for other settings as well. Second, by fixing $B$ and $C_t$ for all $t$, the auditor's expected utility decreases when we reserve more budget to investigate the repeated requests by single user. Yet, this is not unexpected because this approach reduces the amount of consumable auditing resources. Third, by increasing the cost of deterring a single normal data request, we also weaken the advantages of OSSP over the online SSE (when $B$ and $\alpha$ are held constant).

\begin{table}[ht]
\center
\fontsize{7.5}{9}\selectfont
\caption{The mean and standard deviation of auditor's expected utility at OSSP  as a function of $P^t$ (15 testing days). }
\label{P_scare}
\begin{tabular}{@{}cccc@{}}
\toprule
\midrule
 \multirow{2}{*}{$C_t$}    & \multicolumn{3}{c}{$P^t$ for all $t$} \\ \cmidrule(l){2-4} 
                      & $\times 1.0$  & $\times0.5$  & $\times0.1$ \\
\midrule
 $-1$  & $-185.54\pm29.85$ & $-179.92\pm32.71$ & $-175.47\pm32.97$  \\[0.2em]
  $-5$ & $-208.60\pm41.91$ & $-201.34\pm33.92$ & $-180.85\pm32.75$  \\[0.2em]
\bottomrule     
\end{tabular}
\vspace{-2mm}
\end{table}

Next, we considered how the probability of being scared away for normal users (i.e., $P^t$) influences the auditor's expected utility. Recall that, in the experiments reported on so far, we adopted $P^t = 0.186$, an estimate based on an environment that relied upon an unoptimized signaling procedure. However, this value can change in practice for several reasons. First, an optimized signaling scheme will likely influence users' access patterns, such as the frequency of triggering alerts, as well as how users respond to a signaling mechanism. Second, the probability $P^t$ can decrease, if an organization effectively performs policy training with its employees, such that normal users may be less likely to be scared away if they receive a warning message when requesting access to a patient's record. Table \ref{P_scare} shows the expected utility of the auditor at OSSP by varying the input of $P^t$ in the setting of $B=50$. We apply three values of $P^t$ by reducing the original value to its $100\%, 50\%$ and $10\%$. It can be seen that the auditor's expected utility under OSSP improves as $P^t$ reduces. When holding $C_t$ constant, a $t$-test reveals that each pair of performances is statistically significantly different with $p < 10^{-6}$. This indicates that reducing the frequency of quitting for normal users directly reduces the usability costs and, thus, improves the auditing efficiency.

In addition, we tested the average running time for optimizing the SAG on a single alert across all the testing days. Using a laptop running Mac OS, an Intel i7 @ 3.1GHz, and 16GB of memory, we observed that the SAG could be solved in $0.06$ seconds on average. As a consequence, it is unlikely that system users would unlikely perceive the extra processing time associated with optimizing the SAG in practice.


\section{Related Work}
There have been a number of investigations into effective alert management strategies and efficient auditing mechanisms for database systems. In this section, we review the game-theoretic developments that are related to our investigation.

Blocki \emph{et al.} first modeled the audit problem between an auditor and an auditee as a classic security game. In this setting, players act strategically and the goal is to learn an optimal resource allocation strategy that optimizes the expected payoff of the auditor \cite{blocki2013audit}. To simulate the real audit environment, Blocki \emph{et al.} generalized the framework by accounting for the situations with multiple defender resources \cite{blocki2015audit}. However, their methods treat alerts as a set of existing targets that could be attacked, a modeling decision that cannot be readily generalized into the database audit setting. To solve this challenge, Yan \emph{et al.} introduced a game theoretic audit approach to 1) prioritize the order in which types of alerts are investigated and 2) provide an upper bound on how much resource to allocate for auditing each type \cite{yan2018get,yan2019database}. Schlenker \emph{et al.} introduced an approach dealing with how   to assign alerts to security analysts was proposed, where each analyst has different areas of expertise . However, all of these investigations adopted a classic security game framework, which, as our experiments show, hinder the  efficacy of the system.

It has been shown that the integration of a  signaling mechanism into adversarial settings can improve protection. In particular, Xu \emph{et al.} proposed a two-stage security game model to protect targets with a better performance. In the first stage, the defender allocates inspection resources and the attacker selects a target. In the second stage, the defender reveals information, potentially deterring the attacker's attack plan of attack \cite{xu2015exploring}. The advantages of signaling were subsequently extended to Bayesian Stackelberg games, where players have payoff-relevant private information \cite{dughmi2016algorithmic}. It has been shown that signaling also boosts defensive performance in security games, specifically for the task of assigning randomized human patrollers and sensors to protect important targets \cite{xu2018strategic}. However, these investigations aimed to protect existing physical targets as well. The methodology does not easily fit into the auditing environment, where the timing of budget assignment and signaling are reversed.
\section{Discussion}

In this paper, we integrated signaling into auditing frameworks. We strategically warn the attacker in real time and then realize the audit strategy at the end of the audit cycle with an offline mode. In particular, we formalized the usability cost in our approach to model the real-world audit scenario. We further illustrated that such a defensive strategy improves the performance of defenders over existing game theoretic alternatives using real EMR auditing data. Our framework is generalizable to more powerful attackers because as long as the adversarial behavior can be represented by pattern(s), it will fit into our model. As such, our audit model is applicable to any capability of the attacker.

There are several limitations we wish to highlight as opportunities for future investigations. First, since this was a pilot study  for SAGs, we assumed there is only one attacker and that they had a fixed payoff structure in each audit cycle. However, in practice, there may exist many types of attackers. As a next step, we believe that the SAG can be extended for a Bayesian setting where the payoff structure  of the attacker varies according to types. Second, the single attacker assumption is due to the case that the number of malicious employees within an HCO is expected to be very small, such that it is rare that two attackers will be realized in the same audit cycle. Still, this is not unheard of and, thus, it is necessary to address the situation of multiple attackers. Third, in this paper, we assumed that the attacker is perfectly rational. This is a strong assumption and may lead to an unexpected loss in practice. Thus, a more robust version of the SAG will be needed for wide deployment. Fourth, the scalability of solving the SAG, with respect to the number of alert types, needs more investigation in future.

\section{Conclusion}
Alert-based auditing is often deployed in database systems to address a variety of attacks to the data resources being stored and processed. However, the  volume of alerts is often beyond the capability of administrators, thus limits the effectiveness of auditing. Our research illustrates that strategically incorporating signaling mechanisms into the data request workflow can significantly improve the auditing work. We investigated the features, as well as, the value of a game theoretic Signaling Audit Game, along with an Online Stackelberg Signaling Policy to solve the game. While we demonstrated the feasibility of this approach with the audit logs of an electronic medical record system at a large academic medical center, the approach is sufficiently generalized to support auditing in a wide range of environments.  Though our investigation illustrates the merits of this approach, there are certain limitations  that provide opportunities for extension and hardening of the framework for real world deployment.

\bibliographystyle{IEEEtran}
\bibliography{reference}

\end{document}